\renewcommand{\tableofcontents}{\section*{\contentsname} \@starttoc{toc}}
\let\csname equation*\endcsname\reax
\let\csname endequation*\endcsname\relax
\def\be{\begin{eqnarray}}
\def\ee{\end{eqnarray}}
\theoremstyle{definition}
\newtheorem{Def}{Definition}
\newtheorem{Exp}{Example}
\theoremstyle{plain}
\newtheorem{Pro}{Proposition}
\newtheorem{Lem}{Lemma}
\begin{document}

\title{Certifiability criterion for large-scale quantum systems}

\author{F Fr\"owis$^1$, M van den Nest$^2$ and W D\"ur$^1$}
\address{$^1$ Institut f\"ur Theoretische Physik, Universit\"at  Innsbruck, Technikerstra\ss e 25, 6020 Innsbruck, Austria\\
$^2$ Max-Planck-Institut f\"ur Quantenoptik, Hans-Kopfermann-Straße 1, 85748 Garching, Germany.}
\ead{florian.froewis@unige.ch}
\date{\today}

\begin{abstract}
Can one certify the preparation of a coherent, many-body quantum state by measurements with bounded accuracy in the presence of noise and decoherence?
Here, we introduce a criterion to assess the fragility of large-scale quantum states which is based on the distinguishability of orthogonal states after the action of very small amounts of noise.
States which do not pass this criterion are called asymptotically incertifiable. We show that, if a coherent quantum state is asymptotically incertifiable, there exists an incoherent mixture (with entropy at least $\log 2$)  which is experimentally indistinguishable from the initial state.  
The Greenberger-Horne-Zeilinger states are examples of such asymptotically incertifiable states. More generally, we prove that any so-called macroscopic superposition state is asymptotically incertifiable. We also provide examples of quantum states that are experimentally indistinguishable from highly incoherent mixtures, i.e., with an almost-linear entropy in the number of qubits. Finally we show that all unique ground states of local gapped Hamiltonians (in any dimension) are certifiable.

\end{abstract}

\pacs{03.67.-a, 03.65.Yz, 03.65.Ud, 03.65.Ta}

\maketitle

\section{Introduction}
\label{sec:introduction}

The recent experimental advances in control of quantum systems has lead to the possibility to experimentally generate and manipulate quantum systems consisting of a large number of particles. As the system size increases, the detection of quantum coherences or entanglement in the produced states becomes a challenge. One obstacle is the exponential growth of the state space with the number of particles, requiring an exponential amount of measurements to fully characterise a given state. While this restriction can in principle be overcome by measuring only few characteristic properties of interest (e.g., using entanglement witnesses \cite{guhne_entanglement_2009}) a more fundamental challenge remains. It is almost impossible to fully shield a large system from its environment, so effects of noise and decoherence need to be considered. In a typical scenario where noise affects individual particles independently, this leads however to an exponential decrease of fidelity and other quantities of interest with system size, even for a tiny amount of noise per particle. Besides noise, one has to take into account that measurement devices only have finite precision. In addition, measurement statistics fundamentally limits the reachable accuracy with which any quantity can be determined. Hence, the following fundamental question remains: Is it in principle possible to confirm that a quantum state with certain desired properties is present---or has been prepared before the action of noise---if one deals with large systems?

Here, we introduce a fragility criterion to answer this question for many-body quantum states.
More precisely, we deal with the question whether one can certify by means of measurements with a bounded accuracy
that a coherent quantum state $|\psi\rangle$ (or a state close to it) was prepared {\em before} the action of noise or decoherence; or whether the measured signal is compatible with (and hence indistinguishable from) a situation where the initial state was \emph{orthogonal} to $|\psi\rangle$ and thus \textit{a priori} fully distinguishable from it. We consider a scenario where states are subjected to a very small amount of noise, where it is assumed that all particles are affected individually. The source of this noise can either be an imperfect preparation of the state or decoherence induced by interactions of the particles with an uncontrollable environment. If, after the action of the noise, two pure states $|\psi\rangle$ and $|\psi^{\perp}\rangle$ which were initially orthogonal become  indistinguishable by the given measurement device and a bounded number of measurements, then the states are said to be \emph{incertifiable}. This implies that preparation of either state cannot be confirmed, as the measured signal could come from either one of the two initial states. What is more, in this case it is even impossible to determine whether the initial state was a coherent quantum state at all, since (as we will show) the measured signal could equally well come from \emph{any incoherent mixture of $|\psi\rangle$ and $|\psi^{\perp}\rangle$}.  If, on the other hand, a state is found to be distinguishable from {\em all} orthogonal states after the action of the noise,  the state  is said to be certifiable. That is, the measurement device allows one in principle to distinguish between initial situations corresponding to the presence of a pure quantum state (or a state very close to it) and an arbitrary incoherent mixture of several orthogonal states. As a result, a criterion to assess the stability of large-scale coherent quantum states is obtained.

Such a criterion can indeed be formalised and we show that quantitative results can be obtained for large classes of states.)

\begin{itemize}
\item We find that all ground states of local, gapped Hamiltonians
  are certifiable. This includes all unentangled states (i.e.,
  products states), but also states that are highly entangled such as
  cluster states \cite{briegel_persistent_2001} (or graph states with
  a bounded local degree \cite{hein_entanglement_2005}) and Dicke
  states \cite{dicke_coherence_1954}.

\item  We show that Greenberger-Horne-Zeilinger (GHZ) states
  \cite{greenberger_going_1989} are incertifiable in the limit of
  large particle numbers. What is more, {\em all} states that
  correspond to so-called macroscopic superpositions (according to
  criteria specified in
   \cite{bjork_size_2004,korsbakken_measurement-based_2007,marquardt_measuring_2008,frowis_measures_2012})
  are proved to be incertifiable.

\item Interestingly, we also find that certain quantum states can also
  be indistinguishable from many orthogonal states in the presence of
  small amounts of noise. This implies that such states are, in the
  considered set-up, indistinguishable from incoherent mixtures with
  high entropy.
\end{itemize}

We remark that ideas which are to a certain extent related to our approach, have been discussed in
the context of decoherence theory and open system dynamics (see e.g.~\cite{zurek_pointer_1981,zurek_preferred_1993,zurek_decoherence_2003,joos_decoherence_2003,schlosshauer_decoherence:_2007,wismann_optimal_2012}). On the one hand, it is well known that coherences of certain large-scale system are unstable under certain noise processes,
which is e.g.~used in decoherence theory to explain the emergence of a classical world \cite{zurek_pointer_1981,zurek_preferred_1993,zurek_decoherence_2003,joos_decoherence_2003,schlosshauer_decoherence:_2007}. What we
add to this discussion is a precise formulation of some of the intuitive concepts used in this context,
together with a discussion of consequences to experimentally certify the creation of large-scale
quantum states. Finally, the approach to investigate trajectories of orthogonal state pairs under
noise was also taken in \cite{wismann_optimal_2012}, although there the aim was rather different as it was related to obtaining a measure for non-Markovianity of the noise process.

The paper is organised as follows. In section \ref{sec:criterion-stability}, we introduce and further motivate our criterion to asses the stability of coherent quantum states of large-scale systems. In Sec. \ref{sec:basic_examples} we discuss two elementary and illustrative examples: We show that all product states are certifiably, while GHZ states are incertifiable. In section \ref{sec:unique-ground-states}, we provide a general proof that all unique ground states of local, gapped Hamiltonians are certifiable. As an application, we show that Dicke states are certifiable. In section \ref{sec:macr-superp-are}, we show in contrast that all macroscopic superposition states are incertifiable, while we provide examples of states that are indistinguishable from many orthogonal states in section \ref{sec:inst-with-resp}. In section \ref{sec:stab-neighb-inst}, we present an extension of the proposed criteria to $\epsilon$-balls around initial pure states to correctly handle the neighbourhood of (in)certifiable states.  We summarise and conclude in section \ref{sec:conclusion-summary}. The appendix provides a glossary of some basic notions and proof details.

\section{Certifiability of quantum states under noise}
\label{sec:criterion-stability}

In this section, we define---as one of our main contributions---a criterion to asses the stability of coherent quantum states of large-scale systems. This will be accomplished in three steps. In section \ref{sec:motiv_defn} we introduce the notion of \emph{certifiability under noise}. In section \ref{sec:noise_model} we describe the specific noise model considered throughout this work. In section \ref{sec:asymptotic_certifiability} we describe our stability criterion (Definition \ref{def:asymp_cert}) which is related to the asymptotic behaviour of the certifiability measure.

The Hilbert space we consider in the following is $\mathcal{H}=\mathbbm{C}_d^{\otimes N}$ with $d,N \in \mathbbm{N}_{\geq 2}$. Normalised vectors represent pure states of $N$ $d$-level qudits. The set of density matrices $\rho$ on this Hilbert  space is denoted by $\mathcal{D}(\mathcal{H})$. We will concentrate on $d=2$, i.e., systems of qubits, however the methods and concepts we present are trivially extendable to arbitrary $d$.

\begin{figure}[htbp]
\centerline{\includegraphics[width=.8\columnwidth]{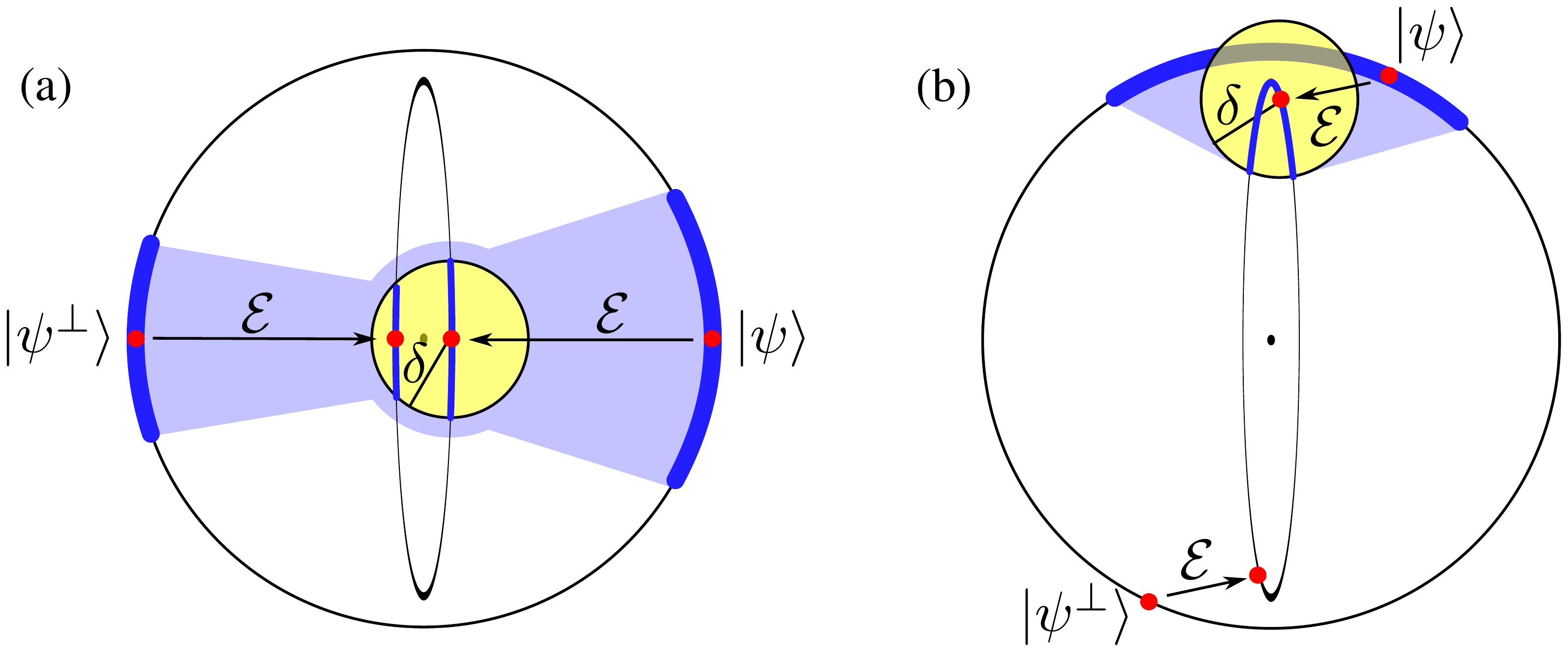}}
\caption[]{\label{fig:intuitivePicture} Schematic illustration of generalised Bloch sphere spanned by two orthogonal vectors $\left| \psi \right\rangle $ and $| \psi^{\bot} \rangle $.  The outer circles represent pure states. The action of $\mathcal{E}$ maps every pure state onto a mixed state situated on the inner ellipse.  The inner, yellow circle represent the measurement uncertainty induced by the apparatus. The thick, blue lines on the outer circles represent the sets of pure states that are compatible with a measurement resulting from $\mathcal{E}( \psi  ) $. The light blue area are the mixed states that are compatible with the same measurement. (a) The quantum states $ \mathcal{E}( \psi )$ and $\mathcal{E}( \psi^{\bot} )$ are so close that the measurement cannot distinguish between them. Therefore, $\left| \psi \right\rangle $ and $| \psi^{\bot} \rangle $ are fragile. (b) Since $\mathcal{E}( \psi )$ is well distinguishable from $\mathcal{E}( \psi^{\bot} )$, the respective pure states are called certifiable.}
\end{figure}

\subsection{Motivation and definition of certifiability}\label{sec:motiv_defn}

We consider a quantum state $|\psi\rangle$ and states orthogonal to $|\psi\rangle$. We assume that $|\psi\rangle$ is subjected to some noise process before a measurement can be performed.  For instance, this can arise due to imperfect state preparation, uncontrolled interactions with the environment or particle loss. The physical process is represented by a completely positive (cp), trace-preserving map ${\cal E}: \mathcal{D}(\mathcal{H})\rightarrow \mathcal{D}(\mathcal{H})$ \footnote{If $\mathcal{E}$ is applied to a pure state $| \phi \rangle $, we often use the shorthand notation ${\cal E}(\phi) \equiv {\cal E}(|\phi\rangle) \equiv {\cal E}(|\phi\rangle\!\langle\phi |)$.}. Due to finite precision of the measurement device and due to a bounded number of repetitions of the experiment, the accuracy of the measurement is bounded by some value $\delta >0$. We are concerned with the question whether such a measurement allows one to distinguish $\left| \psi \right\rangle $ from all orthogonal states in the presence of ${\cal E}$. If ${\cal E}(\psi)$ turns out to be indistinguishable from ${\cal E}(\phi)$ for some state $|\phi\rangle$ orthogonal to $|\psi\rangle$, one cannot determine whether the state $|\psi\rangle$ or the corresponding orthogonal state was initially present.  What is more (as we will show below) in such a situation one cannot even determine whether $|\psi\rangle$ was initially present or any incoherent mixture of $|\psi\rangle$ and $|\phi\rangle$. In this sense, the coherence of the state $|\psi\rangle$ is fragile.

We now define the quantitative measure which will be used to asses the fragility of quantum states under noise in the sense outlined above; see also figure \ref{fig:intuitivePicture} for a schematic illustration.

\begin{Def}[{\bf Certifiability under noise}]\label{def:Stability}
Consider  a noise channel ${\cal E}$. Let $\left| \psi \right\rangle \in \mathcal{H}$ be a normalised quantum state. We define the \emph{certifiability of $|\psi\rangle$ under ${\cal E}$} to be the minimal distance between $|\psi\rangle$ and any orthogonal state after the action of  ${\cal E}$:
\begin{equation}
\label{eq:1}
C(|\psi\rangle, {\cal E}) \mathrel{\mathop:}= \min_{\left| \phi \right\rangle \in  \mathcal{H}: \langle \psi | \phi\rangle  =0} D \left[ \mathcal{E}( \psi), \mathcal{E}( \phi  ) \right]
\end{equation}
where $D(\rho,\sigma) \mathrel{\mathop:}=  \lVert \rho-\sigma \rVert_{1}/2$ is the trace distance between density operators $\rho$ and $\sigma$. 
\end{Def}

We will often use the shorthand notation $C(|\psi\rangle, {\cal E})\equiv C(\psi)$ if the choice of noise channel is clear from the context.

The significance of the certifiability  is best understood by comparing the value $C(\psi)$ to the \emph{measurement accuracy} $\delta$ which is available in a given experiment. Recall that the trace distance is a measure for the distinguishability of density operators, i.e., it quantifies the optimal way of distinguishing two states via measurements. In particular, for all observables $A$  with spectral radius at most 1 it holds that \cite[Thm 9.1]{nielsen_quantum_2010}
\begin{equation}
\lVert \rho-\sigma \rVert_1 \geq \left| \langle A \rangle_{\rho} - \langle A \rangle_{\sigma} \right|.\label{eq:12}
\end{equation}
The quantity $C(\psi)$ is therefore a measure for the distinguishability between initially orthogonal states under the influence of noise described by ${\cal E}$. In particular, in any set-up where measurements are subject to an error $\delta> C(\psi)$  it is impossible to determine whether $|\psi\rangle$ was initially prepared or some state orthogonal to it. In this sense, the preparation of the state $|\psi\rangle$ cannot be certified.

Let $|\phi\rangle$ be a state orthogonal to $|\psi\rangle$ satisfying  $ D \left[ \mathcal{E}( \psi), \mathcal{E}( \phi  ) \right] = C(\psi)$. Consider the following one-parameter family of density operators:
\begin{equation}
\label{eq:9}
\rho(a):=
a \left| \psi \rangle\!\langle \psi\right| + (1-a) \left| \phi \rangle\!\langle \phi\right| ,
\end{equation}
with $a \in \left[ 0,1 \right]$. Then \begin{equation} D \left[ \mathcal{E}( \psi), \mathcal{E}( \rho(a)  ) \right]\leq  C(\psi)\end{equation} by using the triangle inequality. This implies that, in a scenario where measurements are subject to an error $\delta>C(\psi)$, one cannot determine if the pure state $|\psi\rangle$ \emph{or any density operator in the family $\rho(a)$} was present before the noise ${\cal E}$---note that, e.g., the state $\rho(1/2)$ has entropy $\log 2$ (compare to $|\psi\rangle$ which has \emph{zero} entropy), being an equal mixture of $|\psi\rangle$ and $|\phi\rangle$. In this sense, the preparation of a coherent quantum state cannot be certified. Notice also that the states $\rho(a)$ could be much less entangled than $\left| \psi \right\rangle $ itself, or they could even be fully separable (cf. the example of the GHZ state in section \ref{sec:basic_examples}). In such a case, also the presence of entanglement cannot be certified.

The certifiability $C(\psi)$ thus provides a measure to assess how fragile  pure states are under the influence of noise. In this work we will mostly be interested in the  behaviour of $C(\psi)$ for large systems in the presence of very small amounts of (local white) noise, giving rise to the notion of \emph{asymptotic (in)certifiability} in section \ref{sec:asymptotic_certifiability}. Before describing the latter notion, we first discuss the white noise model considered in this work.

\subsection{Noise model}\label{sec:noise_model}
Throughout the paper, we will focus on noise acting on individual particles, where we consider uncorrelated depolarisation channels. Given $\rho \in \mathcal{D}(\mathcal{H})$ and the depolarisation channel $\mathcal{E}^{(i)}:\mathcal{D}(\mathcal{H})\rightarrow\mathcal{D}(\mathcal{H})$ that acts on the $i$th particle, the state $\mathcal{E}^{(i)}(\rho)$  is defined as
\begin{equation}
\label{eq:8}
\mathcal{E}^{(i)}(\rho) = p \rho + (1-p) \mathrm{Tr}_i \rho \otimes \frac{\mathbbm{1}^{(i)}}{2},
\end{equation}
where $p \in \left[ 0,1 \right]$ is the noise parameter. The term $\mathrm{Tr}_i \rho \otimes \mathbbm{1}^{(i)}/2$ means that the $i$th particle is traced out and replaced by the complete mixture $\mathbbm{1}/2$. The total action of the depolarisation channel is the product of the individual channels acting on each qubit and is denoted by
\begin{equation}
\mathcal{E} \equiv \mathcal{E}^{(1)} \otimes \dots \otimes \mathcal{E}^{(N)}.
\end{equation}

The uncorrelated depolarisation channel---often also called white noise channel---is a natural choice for the study of stability of many-body states. Its action is invariant under single-body unitary rotations and, since it is uncorrelated, decoherence free subspaces for entangled states do not exist. When $p= 0$, the channel maps every quantum state to the complete mixture $\mathbbm{1}^{\otimes N}/2^N$. Physically, the depolarisation channel can arise in the context of thermal baths with infinite temperature to which the qubits individually couple. Then, one assigns $p= e^{-\gamma t}$, where $\gamma$ is the depolarisation rate and $t$ the evolving time. Alternatively, the depolarisation channel can be seen as a loss channel, where for measurements every qubit accounts only with probability $p$.

Henceforth we consider the situation where a fixed, very small amount of noise acts on each particle. This corresponds to a white noise parameter $p$ for some   constant $p$  (i.e.~independent of the number of particles $N$) which is arbitrarily close to 1.

\subsection{Large-scale systems: asymptotic certifiability}\label{sec:asymptotic_certifiability}

As discussed in section \ref{sec:motiv_defn}, the main importance of the measure $C(\psi)$ lies in its comparison with the accuracy with which expectation values of measurements can be determined in an experiment.  The achievable accuracy of a measurement is fundamentally limited by measurement statistics. Given the fact that quantum mechanics is a probabilistic theory, expectation values of any observable can only be determined with an error $1/\sqrt m$, where $m$ are the number of repetitions of the experiment. If we are interested in large system sizes $N$, for all practical purposes the smallest achievable error of the measurements will behave as $1/$poly$(N)$ for some polynomial poly$(N)$, due to the fact that at most a polynomial number of repetitions of the experiment are possible in principle.  Imperfect measurement apparatuses or finite resolution of the measurement device may even lead to a constant achievable error $\delta$. \footnote{In some scenarios, it is possible to treat the noise induced by the measurement apparatus mathematically as noise acting on the state before the measurement. For example, a measurement with $\sigma_z$ that only probabilistically gives the correct answer can be modelled as a bit-flip error on the state of interest. Then, $\delta$ purely depends on the measurement statistics, while $\mathcal{E}$ covers the imperfections of state preparation, decoherence and apparatus noise.}

In the following, we will be interested in the scaling of the certifiability $C(\psi)$ with the system size $N$, as follows:

\begin{Def}\label{def:asymp_cert}[{\bf Asymptotically certifiable and incertifiable quantum states}] Let ${\cal E}$ denote local white noise with parameter $p$ as described in section \ref{sec:noise_model}. A state $|\psi\rangle$ is called \emph{asymptotically incertifiable under local white noise} if the quantity $C(|\psi\rangle, {\cal E})$ tends to zero faster than any inverse polynomial in $N$, i.e., if \be C(|\psi\rangle, {\cal E})=O(p^{N^\alpha})\ee for some $\alpha >0$. Conversely, a state with \be C(|\psi\rangle, {\cal E}) \geq  1/\mbox{poly}(N)\ee for some polynomial will be called asymptotically certifiable under local white noise. Henceforth, for the sake of brevity we will simply refer to ``certifiable'' and ``incertifiable'' states.
\end{Def}
If a state is found to be incertifiable in the sense of definition \ref{def:asymp_cert}, then for large system sizes it is essentially impossible to discriminate between this state and some orthogonal state, or in fact any incoherent mixture of these two states [recall  (\ref{eq:9})] with any measurement that is repeated at most poly$(N)$ times.

As we will see below, the asymptotic behaviour of the certifiability may be very different, also qualitatively, for various state families. In other words, this measure provides a criterion  of discriminating between ``fragile'' and ``stable'' large-scale coherent quantum states under the influence of noise. We further remark that several measures and quantities of interest which are often considered to determine the quality of a prepared state do not allow for a meaningful distinction between fragile and non-fragile large-scale states. Consider for instance the purity of the noisy quantum state, $\mathrm{tr}[{\cal E}(\psi)^2]$, or the fidelity of ${\cal E}(\psi)$ with the initial desired pure state $|\psi\rangle$. Both quantities can be shown to decay exponentially fast with the number of particles $N$ for \emph{all} states if each particle is affected by, say,  local white noise. The decay rate may differ, but in the limit of large particle number $N$ that we are interested in, these quantities will be exponentially small whenever noise is present. This implies on the one hand that they show the same qualitative behaviour for all states, and hence to not allow to clearly distinguish fragile and non-fragile states. Furthermore, the resolution of a realistic measurement device will not be sufficient to determine these quantities in a meaningful way as this would require exponential precision.

Regarding definition \ref{def:asymp_cert}, we finally remark that limited accuracy of measurement devices or bounded number of measurements may render certain states that are certifiable in above sense as incertifiable with respect to measurements with a limited precision. In turn, some states that are incertifiable in above sense may in fact be certifiable as long as only a system of restricted size $N$ is considered, even if a measurement device with a given finite precision is considered. In the following, we will however not be interested in such cases, but rather provide general results that are concerned with principal bounds and the scaling behaviour with system size.

\section{Basic examples of certifiable and incertifiable states}\label{sec:basic_examples}
To make the reader more familiar with the notion of (in)certifiability, we consider two basic examples.                                                                                                                                                                               \begin{Exp}[{\bf Product States}]
\label{Ex:productstate}
Here, we explicitly show that the certifiability of any $N$-qubit product state scales as $1/$poly$(N)$ i.e.~all product states are asymptotically certifiable. In particular, the trace distance between a noisy product state and any other noisy orthogonal state is at least of the order of $p/N$. Furthermore the observable $A$ which allows to distinguish a noisy product state from any noisy orthogonal state can always be chosen to be 1-local i.e. $A=\sum_i A_i$ where each observable $A_i$ acts on one single qubit.                                                                                                                                                                                                    To show this, we consider the state $\left| N,0 \right\rangle = \left| 0 \right\rangle^{\otimes N} $ where $\left| 0 \right\rangle $ is the eigenstate of the Pauli operator $\sigma_z$ with eigenvalue $+1$. The eigenstate with eigenvalue $-1$ is denoted by $\left| 1 \right\rangle.$  The state $\left| N,0 \right\rangle $ is therefore an eigenstate of \footnote{The expression $\sigma^{(i)}_z$ means that $\sigma_z$ acts on the $i$th particle only (i.e., $\sigma_z^{(i)} \equiv \mathbbm{1}^{\otimes i-1}\otimes\sigma_z\otimes \mathbbm{1}^{\otimes N-i-1}$).} \be J_z = \frac{1}{2} \sum_{i=1}^N \sigma^{(i)}_z\ee  and one can easily see that it is the unique eigenstate with maximum eigenvalue $N/2$. The eigenvalues of $J_z$ are $\lambda_{k} = N/2-k$ where  $k = 0,\dots,N$. The eigenspaces of $J_{z}$ with eigenvalues $\lambda_k$ are generally degenerate and have dimensions $\binom{N}{k}$. Hence, the second highest eigenvalue of $J_z$ equals $N/2-1$; an instance of a respective eigenstate is the W state \cite{dur_three_2000} \be | N,1 \rangle &=& \frac{1}{\sqrt{N}} \sum_{i=1}^N | 1 \rangle\!\langle 0|^{(i)}| N,0 \rangle. \ee We now use the fact that one can distinguish $\left| N,0 \right\rangle $ from all other orthogonal states by measuring $J_z$. Let $\left| \phi \right\rangle $ be the state that minimises $C(\left| N,0 \right\rangle )$ of  (\ref{eq:1}). Then, using  (\ref{eq:12}), one estimates
                                                                                                  \begin{equation}
\label{eq:11}
\begin{split}
  C( | N,0  \rangle ) &= \frac{1}{2}\lVert \mathcal{E} (| N,0 \rangle\!\langle N,0 | -  | \phi \rangle\!\langle \phi |  ) \rVert_1 \\ & \geq \frac{1}{N}                               | \mathrm{Tr} [ J_{z} \mathcal{E} (  | N,0 \rangle\!\langle N,0 |  -  | \phi \rangle\!\langle \phi |  ) ] |\\ & = \frac{1}{N}| \mathrm{Tr} [ \mathcal{E} (J_{z}   )  (   | N,0 \rangle\!\langle N,0 |-  | \phi \rangle\!\langle \phi | )  ] |.
\end{split}
\end{equation}
The last equation holds because for the present noise model, the expectation value of a perfect measurement $J_z$ under noisy quantum states is equal to the expectation value of a noisy measurement under perfect states. This is true because the Kraus operators (which are the Pauli operators) are hermitian. Since $\mathcal{E}(\mathbbm{1}) = \mathbbm{1}$ and $\mathcal{E}(\sigma) = p \sigma$ for any Pauli operator, one easily finds that $\mathcal{E}(J_z) = p J_z$. Therefore, one has                                                              \begin{equation}                                                                       \label{eq:13}
C(\left| N,0 \right\rangle ) \geq \frac{p}{N} \left| \langle J_z \rangle_{\left| N,0 \right\rangle } - \langle J_z \rangle_{\phi}\right|.
\end{equation}
As already noted, the expectation value of $J_z$ for $\left| N,0 \right\rangle $ equals $N/2$, while for any orthogonal state the expectation value is at most $N/2-1$. Hence,  (\ref{eq:13}) can be further estimated and we find
                                                                                                  \begin{equation}
\label{eq:14}
C(\left| N,0 \right\rangle ) \geq \frac{p}{N}.
\end{equation}

This means that the product state $\left| N,0 \right\rangle $ is asymptotically certifiable with respect to definition 2 and therefore, even in the presence of white noise, we can always determine with poly-many measurements whether the quantum state $\left| N,0 \right\rangle $ was prepared or any orthogonal state [cf.~figure \ref{fig:intuitivePicture}(b)]. An instance of a measurement that can be used for this task is $J_z$. 
Note that  (\ref{eq:1}) is invariant under local unitary rotation. This example therefore shows that all product states are stable. A more general result is proven in section \ref{sec:unique-ground-states}.
\end{Exp}
\begin{Exp}[{\bf GHZ states}]
 \label{Ex:GHZ}
  The second example is the multipartite GHZ state
\begin{equation}
\label{eq:15}
| \mathrm{GHZ}  \rangle = \frac{1}{\sqrt{2}}  (  | 0  \rangle ^{\otimes N} +  | 1  \rangle^{\otimes N}   ).
\end{equation}
This quantum state is often referred to as a Schr\"odinger cat state (see section \ref{sec:macr-superp-are}). In the following, we show that this state is asymptotically incertifiable. We therefore consider an ``orthogonal GHZ state'' \be | \mathrm{GHZ}^{\bot}\rangle = \frac{1}{\sqrt{2}}( \left| 0 \right\rangle ^{\otimes N} - \left| 1 \right\rangle^{\otimes N} ).\ee 
The certifiability $C(\mathrm{GHZ})$ of the GHZ state  can be bounded from above by the trace distance between $| \mathrm{GHZ} \rangle $ and $| \mathrm{GHZ}^{\bot} \rangle $ in the presence of white noise. A straightforward calculation results in
                                                                                                                                                                                                                                                                                             \begin{equation}
\label{eq:10}
 \begin{split}
C(\mathrm{GHZ}) & \leq \frac{1}{2}\lVert \mathcal{E}(
|  \mathrm{GHZ} \rangle)  -\mathcal{E}(
|  \mathrm{GHZ}^{\bot}\rangle) \rVert_{1}\\ & =
\frac{1}{2} \lVert \mathcal{E} (  | 0  \rangle\! \langle 1 | ^{\otimes N}   )+  \mathcal{E} (  | 1 \rangle\!\langle 0 | ^{\otimes N}   ) \rVert_{1}\\ & \leq
\lVert \mathcal{E} (  | 0 \rangle\!\langle 1 |   ) \rVert^N_1 = p^N.
\end{split}
\end{equation}
For the last estimate one uses the triangle inequality and the fact that $\lVert  \mathcal{E}\left( \left| 1 \rangle\!\langle 0\right| \right) \rVert_1 = \lVert \mathcal{E}\left( \left| 0 \rangle\!\langle 1\right|  \right) \rVert_1$. One sees that the GHZ state is asymptotically incertifiable because the trace distance to its orthogonal counterpart $| \mathrm{GHZ}^{\bot}\rangle $ decays with at least $p^N$, that is, exponentially with system size (cf.~figure \ref{fig:intuitivePicture}a).  Note that the above derivation holds for every noise parameter $p$ i.e. \emph{$p$ may be arbitrary close to 1}. This means that it is not possible for whatever tiny amount of noise to experimentally distinguish between the GHZ state $| \mathrm{GHZ}^{\bot} \rangle$. Recalling the discussion of  (\ref{eq:9}), this also applies that, in the presence of small amounts of noise, the GHZ state cannot be distinguished from the mixture \be \frac{1}{2}| \mathrm{GHZ} \rangle\!\langle \mathrm{GHZ}| + \frac{1}{2}| \mathrm{GHZ}^{\bot} \rangle\!\langle \mathrm{GHZ}^{\bot}|  = \frac{1}{2}\left| 0 \rangle\!\langle 0\right|^{\otimes N} + \frac{1}{2} \left| 1 \rangle\!\langle 1\right|^{\otimes N} ,\ee which is a separable state! Hence, not only the coherences of the GHZ state are incertifiable, but also other quantum properties like entanglement.
                                                                                                  \end{Exp}

These two examples show that definition 2 gives rise to a criterion that makes qualitatively very different statements about the stability of distinct families many-qubit quantum states. In sections \ref{sec:unique-ground-states} and \ref{sec:macr-superp-are}, we extend these examples to large classes of quantum states.

\section{Unique ground states of local, gapped Hamiltonians are certifiable}
                                                                                                                                                                                                                                                                                              \label{sec:unique-ground-states}
In this section, we discuss an important class of quantum states that are asymptotically certifiable. It is the set of all unique ground states of $k$-local, gapped Hamilton operators. By $k$-local it is meant that if we write the Hamiltonian in the Pauli basis, the maximal weight of every term is $k$, where $k=O(1)$. In other words, we consider only interactions between a constant number of qubits (without restricting ourselves to a specific spatial geometry). Here, the term gapped means that the energy difference $\Delta$ between the ground state and the first excited state is at least $O(1/ \mathrm{poly}(N))$. Finally, a ground state is unique if the ground state space is nondegenerate.
\begin{Pro}
\label{pro:unique-ground-states}
Let $\left| \psi \right\rangle \in  \mathcal{H}$ be the unique ground state of a $k$-local, gapped Hamiltonian $H$. Then $|\psi\rangle$ is asymptotically certifiable under local white noise. That is, $C(\psi) \geq 1/ \mathrm{poly}(N)$.
                                                                                                  \end{Pro}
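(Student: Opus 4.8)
\emph{Proof strategy.} The plan is to produce, for each such ground state $|\psi\rangle$, a single observable $A$ with spectral radius at most $1$ whose noisy expectation value separates $\mathcal{E}(\psi)$ from $\mathcal{E}(\phi)$ by $1/\mathrm{poly}(N)$ for \emph{every} $|\phi\rangle$ orthogonal to $|\psi\rangle$; then (\ref{eq:12}) gives $C(\psi)\ge 1/\mathrm{poly}(N)$ at once. First I would replace $H$ by $H-E_0\mathbbm{1}$ (with $E_0$ the ground energy), so that $H\succeq 0$ and $H|\psi\rangle=0$; since the ground state is \emph{unique} and the gap is $\Delta\ge 1/\mathrm{poly}(N)$, every $|\phi\rangle\perp|\psi\rangle$ obeys $\langle\phi|H|\phi\rangle\ge\Delta$. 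If one could measure $H$ itself noiselessly this would already finish the argument, exactly as $J_z$ did in Example~\ref{Ex:productstate}; the only obstruction is that the channel degrades the measurement.

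The key idea is to pre-compensate for the noise. In the Pauli basis the depolarisation channel acts diagonally, $\mathcal{E}(\sigma_S)=p^{|S|}\sigma_S$ for a Pauli string supported on the site-set $S$, and for a fixed $p\in(0,1]$ it is invertible with $\mathcal{E}^{-1}(\sigma_S)=p^{-|S|}\sigma_S$. I would set
\begin{equation}
A:=-c\,\mathcal{E}^{-1}(H),\qquad c:=\lVert\mathcal{E}^{-1}(H)\rVert^{-1},
\end{equation}
so that $\lVert A\rVert\le 1$. Because $\mathcal{E}$ is self-adjoint (its Kraus operators are Hermitian Paulis, as already used in Example~\ref{Ex:productstate}) and $\mathcal{E}\circ\mathcal{E}^{-1}=\mathrm{id}$, one has $\langle A\rangle_{\mathcal{E}(\rho)}=\mathrm{Tr}[\mathcal{E}(A)\rho]=-c\,\langle H\rangle_\rho$ for every state $\rho$. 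Hence for any $|\phi\rangle\perp|\psi\rangle$, (\ref{eq:12}) gives $\lVert\mathcal{E}(\psi)-\mathcal{E}(\phi)\rVert_1\ge |\langle A\rangle_{\mathcal{E}(\psi)}-\langle A\rangle_{\mathcal{E}(\phi)}|=c\,\langle\phi|H|\phi\rangle\ge c\,\Delta$, and minimising over $\phi$ yields $C(\psi)\ge c\Delta/2$. Note that $\mathcal{E}^{-1}(H)$ has the same Pauli support as $H$, so $A$ is still $k$-local, paralleling the $1$-locality of $J_z$ in Example~\ref{Ex:productstate}.

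What remains is the only genuinely quantitative step: showing $c\ge 1/\mathrm{poly}(N)$, i.e.\ $\lVert\mathcal{E}^{-1}(H)\rVert\le\mathrm{poly}(N)$. Writing $H=\sum_{S:\,|S|\le k}c_S\,\sigma_S$, one has $\lVert\mathcal{E}^{-1}(H)\rVert\le\sum_S|c_S|\,p^{-|S|}\le p^{-k}\sum_S|c_S|$, where $p^{-k}$ is a constant since $p$ is a fixed constant bounded away from $0$. The remaining $\ell_1$-weight $\sum_S|c_S|$ is polynomial in $N$ because $k$-locality leaves only $O(N^k)$ admissible Pauli strings while each coefficient satisfies $|c_S|=2^{-N}|\mathrm{Tr}(H\sigma_S)|\le\lVert H\rVert=\mathrm{poly}(N)$ (for a $k$-local Hamiltonian assembled from interactions of bounded strength, so that the condition $\Delta\ge1/\mathrm{poly}(N)$ is meaningful at all). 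Therefore $c\ge 1/\mathrm{poly}(N)$ and, with $\Delta\ge 1/\mathrm{poly}(N)$, we conclude $C(\psi)\ge c\Delta/2\ge 1/\mathrm{poly}(N)$.

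I expect the main obstacle to lie precisely in controlling $\lVert\mathcal{E}^{-1}(H)\rVert$: inverting the channel amplifies a weight-$w$ Pauli component by $p^{-w}$, so the argument hinges on (i) $p$ being a fixed constant, so that the amplification is at most the constant $p^{-k}$, and (ii) $k$-locality, so that only polynomially many Pauli components contribute. If the per-particle noise were allowed to grow with $N$ ($p\to 0$), or to act coherently on many qubits at once, this step---and the conclusion---would fail, consistent with the fact that sufficiently strong noise drives every state towards the maximally mixed one. As an application one then obtains certifiability of Dicke states, which arise as the unique ground state of a gapped, $2$-local Hamiltonian.
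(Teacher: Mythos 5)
Your proposal is correct, and it shares the paper's overall skeleton---lower-bound the trace distance via the observable inequality (\ref{eq:12}) with an energy-like witness, use that the depolarising channel is self-adjoint in the Hilbert--Schmidt sense (Hermitian Pauli--Kraus operators), and invoke uniqueness of the ground state plus the gap to get a difference of expectation values of at least $\Delta$. Where you genuinely diverge is in handling the fact that $\mathcal{E}(H)\neq p^kH$ when the Pauli terms of $H$ have non-uniform weights. The paper fixes this by enlarging the Hilbert space with $k$ ancilla qubits in $|0\rangle^{\otimes k}$ and padding each term with $\sigma_z$'s so that every term has weight exactly $k$; then $\mathcal{E}(\tilde H)=p^k\tilde H$ and, together with $D(\rho\otimes\sigma,\rho'\otimes\sigma)=D(\rho,\rho')$, the witness is (a rescaled) $\tilde H$ itself, yielding the clean bound $C(\psi)\geq p^k\Delta/2$ for the normalised Hamiltonian. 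You instead pre-compensate the noise by taking the witness $A\propto\mathcal{E}^{-1}(H-E_0\mathbbm{1})$, which is legitimate since $\mathcal{E}$ is diagonal and invertible as a linear map on the Pauli basis for constant $p>0$, and your quantitative burden shifts to showing $\lVert\mathcal{E}^{-1}(H)\rVert\leq\mathrm{poly}(N)$, which you do correctly via $p^{-k}$ times the $\ell_1$-weight of the Pauli coefficients (polynomially many strings of weight $\leq k$, each coefficient bounded by $\lVert H\rVert=\mathrm{poly}(N)$, the same bounded-interaction-strength assumption the paper uses for its rescaling of $H$ to spectral radius one). Your route avoids the ancilla construction and works uniformly in the weights, at the price of a slightly cruder normalisation constant; both end with $C(\psi)\geq\Delta/\mathrm{poly}(N)\geq 1/\mathrm{poly}(N)$, and your remark that the argument breaks if $p\to0$ with $N$ matches the paper's standing assumption that $p$ is a constant.
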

\begin{proof}
 Let $\sigma_0$ denote the single-qubit identity matrix and let $\sigma_1, \sigma_2$ and $\sigma_3$ denote the Pauli matrices. First, we write $H$ in the Pauli basis
\begin{equation}
\label{eq:2}
H = \sum_{i_1,\dots,i_N =0}^3 \alpha_{i_1,\dots,i_N}\sigma_{i_1}\otimes \dots \otimes \sigma_{i_N},
\end{equation}
with $\alpha_{i_1,\dots,i_N} \in \mathbbm{R}$. From now on, we abbreviate $\mathbf{i} \equiv i_1,\dots, i_N$ and $\sigma_{\mathbf{i}} \equiv \sigma_{i_1}\otimes \dots \otimes \sigma_{i_N}$. The weight $w_{\mathbf{i}}$ of $\sigma_{\mathbf{i}}$ is defined the be the number of qubits on which this operator acts nontrivially. Let us first consider the case where $w_{\mathbf{i}} = k \in \mathbbm{N}$ for all $\mathbf{i}$ i.e.~each $\sigma_{\mathbf{i}}$ acts on precisely $k$ qubits. Later we will extend the proof to the general case $w_{\mathbf{i}} \leq k$.
Let $\left| \phi \right\rangle \in \mathcal{H}$ be an arbitrary  quantum state satisfying $\langle \psi | \phi\rangle =0$. Then one has
\begin{equation}
\label{eq:3}
\begin{split}
D [ \mathcal{E}(\psi), \mathcal{E}(\phi)] \geq \frac{1}{2} \left| \mathrm{Tr} \left[ H\mathcal{E} \left( \left| \psi \rangle\!\langle \psi\right| - \left| \phi \rangle\!\langle \phi\right|   \right)\right] \right|
\end{split}
\end{equation}
For this step we have to assume that the spectral radius of $H$ is at most unity. If not, we have to rescale it. Physically, it is reasonable to assume interaction strengths between the particles that are independent of the system size; therefore, $\alpha_{\mathbf{i}} = O(1)$. For local $H$, the number of $\sigma_{\mathbf{i}}$ is polynomial in $N$ and the spectral radius of any $\sigma_{\mathbf{i}}$ is unity. Therefore, the rescaling factor is at most $1/poly(N)$ which does not change the result of the proposition. So without loss of generality, the spectral radius of $H$ is set to unity.
Next, the expectation value of $H$ for a noisy state is identical to the expectation value of the noisy $H$ for a noiseless state, that is, for any $\rho$, $\mathrm{Tr} \left[ H\mathcal{E} \left( \rho   \right)\right] = \mathrm{Tr} \left[ \mathcal{E}(H)\rho\right]$. For a single qubit, the action of $\mathcal{E}$ is easy to calculate in the Pauli basis. One has that $\mathcal{E}(\sigma_0) = \sigma_0$ and $\mathcal{E}(\sigma_{i}) = p\sigma_{i}$ for every $i=1, 2, 3$; hence $\mathcal{E}(\sigma_{\mathbf{i}}) = p^{w_{\mathbf{i}}} \sigma_{\mathbf{i}}$. Since $w_{\mathbf{i}} = k$ for all $\mathbf{i}$, one has $\mathcal{E}(H) = p^k H$. Hence, one finds that
                                                                                                   \begin{equation}
\label{eq:4}
|\mathrm{Tr}  [ \mathcal{E}(H)  (  | \psi \rangle\!\langle \psi | -  | \phi \rangle\!\langle \phi |   ) ]| = p^k  |\langle H \rangle_{\psi} - \langle H \rangle_{\phi}  |.
\end{equation}
Since $\left| \phi \right\rangle $ is orthogonal to $|\psi\rangle$, the difference between the two expectation values in  (\ref{eq:4}) is at least the energy gap $\Delta$. Consequently, we find \be C(\psi)\geq p^k \Delta/2.\ee
This completes the proof of the special case where $w_{\mathbf{i}} = k$ for all $\mathbf{i}$.
In general, one has $w_{\mathbf{i}} \leq k$. Then, we enlarge the Hilbert space to $\mathcal{\tilde{H}}$ by adding $k$ additional qubits i.e. $\tilde{\cal H}$ is a system of $N+k$ qubits. We define
\begin{equation}
\label{eq:5}
\begin{split}
| \tilde\psi  \rangle &=  | \psi  \rangle
\otimes  | 0  \rangle ^{\otimes k} \in \mathcal{\tilde{H}}\\
|\tilde\phi  \rangle &=  | \phi  \rangle
\otimes  | 0  \rangle ^{\otimes k}\in \mathcal{\tilde{H}}\\
\tilde{H} &= \sum_{\mathbf{i}} \alpha_{\mathbf{i}} \, \sigma_{\mathbf{i}} \otimes \sigma_z^{\otimes (k-w_{\mathbf{i}})}\otimes \mathbbm{1}^{\otimes w_{\mathbf{i}}}.
\end{split}
\end{equation}
In the above definition of $\tilde H$, the idea is to append $\sigma_z$ operators such that now the weight of every term of $\tilde{H}$ is precisely $k$. Then, we can use similar arguments as before. One has
\be  D[ \mathcal{E}(\psi), \mathcal{E}(\phi) ] &=& D[ \mathcal{E}(\tilde\psi), \mathcal{E}(\tilde\phi)]\geq  \frac{1}{2}  | \mathrm{Tr}  [ \tilde{H} \mathcal{E}  ( | \tilde\psi  \rangle \!   \langle \tilde\psi | -  | \tilde\phi  \rangle \!  \langle \tilde\phi |   ) ]  | \\  &=& \frac{p^k}{2} |   \langle \tilde{H} \rangle_{\tilde{\psi}} - \langle \tilde{H} \rangle_{\tilde{\phi}}| \geq \frac{1}{2}p^k \Delta \ee
Here the first identity is obtained by using that $D(\rho, \rho')= D(\rho\otimes \sigma, \rho'\otimes\sigma)$ for all states $\rho$, $\rho'$ and $\sigma$; the last estimate is valid because $\langle \tilde{H} \rangle_{\widetilde{\psi}} = \langle H \rangle_{\psi}$ and respectively for $| \phi\rangle $.
 \end{proof}

Note that the set of unique ground states of local gapped Hamiltonians includes all product states and even tensor products of general quantum states on groups of qubits where the group size is independent of the system size. A large class of unique ground states of local gapped Hamiltonians are local graph states \cite{hein_entanglement_2005}, i.e.~graph states corresponding to a graph with bounded degree. More precisely, graph states are quantum states that are associated with an undirected graph $G$. Every vertex of the graph represents a qubit and the edges between the vertices define the neighbourhood. The graph state $|G\rangle$ is then defined as the unique state that is eigenstate to the so-called generating stabilisers \be K_a = \sigma_{x}^{(a)} \prod_{b\in N_a} \sigma_z^{(b)} \quad (a = 1,\dots,N), \ee where $N_a$ is the neighbourhood of vertex $a$. A graph is called local if every vertex has only a constant number of neighbours. The corresponding gapped Hamiltonian which has $|G\rangle$ as unique ground state is $-\sum_{a = 1}^N K_a$. Prominent examples in this category are the cluster states \cite{briegel_persistent_2001} where every qubit has two neighbours in an one-dimensional lattice or four neighbours in a two-dimensional lattice, respectively.

Sometimes, an educated guess for a suitable Hamiltonian suffices to show stability for a given quantum state $|\psi\rangle$. A general approach to construct a Hamiltonian is to identify a commuting set of local, gapped operators $H_1,\dots,H_l$ with the property that  $\left| \psi \right\rangle$ is the only state that is a ground state of all operators of this set simultaneously. Then, the operator $H:=H_1 + \dots + H_l$ is  local and gapped and has $|\psi\rangle$ as a unique ground state. This method is standardly used to construct Hamiltonians for graph states, as discussed above. Next we apply this approach to the Dicke states:

\begin{Exp}[{\bf Dicke states}]
\label{exp:DickeStates}
Here we show that all Dicke states are unique ground states of local, gapped Hamiltonians and hence asymptotically certifiable owing to proposition \ref{pro:unique-ground-states}. In example \ref{Ex:productstate}, we encountered already certain Dicke states, namely the product state $\left| N,0 \right\rangle $ and the W state $\left| N,1 \right\rangle $ as the equally weighted superposition of all permutations of $\left| 1 \right\rangle \otimes \left| 0 \right\rangle ^{\otimes (N-1)}$. To define general Dicke states, consider again the $z$ component of the total angular momentum $J_z$. The Dicke state $\left| N,k \right\rangle $, where $k\in\{0, \dots, N\}$,  is defined as the unique eigenstates of $J_z$ with eigenvalue $\lambda_{k} = N/2 - k$ which is invariant under any particle permutation (i.e., $| N,k \rangle $ is the symmetrised vector of $\left| 0 \right\rangle ^{\otimes N-k}\otimes \left| 1 \right\rangle ^{\otimes k}$).
Consider the square of the total angular momentum $J^2 = J_x^2 + J_y^2 + J_z^2$. All Dicke states are eigenstates of $J^2$ with the maximal eigenvalue $\mu_{\mathrm{max}} = N/2(N/2+1)$ and (invoking Schur-Weyl-duality \cite{goodman_representations_1998}) linear superpositions of Dicke states are the only states with eigenvalue $\mu_{\mathrm{max}}$. Now fix an arbitrary $k$ and define \be \tilde{J}_z =  J_z -\left( N/2 - k \right) \mathbbm{1}^{\otimes N}.\ee Then $\tilde{J}_z^2$ is a positive semi-definite operator. It is easily verified that the ground space of $\tilde{J}_z^2$ is spanned by the set of computational basis states which have precisely $k$ slots equal to $|1\rangle$. As a result, the Dicke state $\left| N,k \right\rangle $ is the unique state which is a ground state of  $-J^2$ and  $-\tilde{J}_z^2$ simultaneously.  Hence, $\left| N,k \right\rangle $  is the unique ground state of $ -J^2 -\tilde{J}_z^2$. In the theory of angular momentum algebra, one finds that $J^2$ and $J_z$ commute. Therefore, also $J^2$ and $\tilde{J}_z^2$ commute. Finally, the spectra of $J^2$ and $\tilde{J}_z^2$ are gapped, which again follows from the commutation algebra. Hence, $ -J^2 -\tilde{J}_z^2$ is also gapped. We have thus showed that every Dicke state is the unique ground state of a local, gapped Hamiltonian.

\end{Exp}
\section{Macroscopic superpositions are incertifiable}
\label{sec:macr-superp-are}

Here, we show that all ``macroscopic superpositions'' are asymptotically incertifiable: for every macroscopic superposition there always exists an orthogonal quantum state (which  also turns out to be a macroscopic superposition) such that the action of very small amounts of white noise makes them indistinguishable by efficient means. 

Macroscopic superpositions in the spirit of Schr\"odinger's cat are superpositions of two ``classical states'' $|\psi_0\rangle$ and $|\psi_1\rangle$ that give rise to a ``macroscopic quantum state'' $|\psi_0\rangle + |\psi_1\rangle$. Criteria for macroscopicity have been put forward in   \cite{korsbakken_measurement-based_2007,marquardt_measuring_2008,bjork_size_2004} and have been compared and unified in   \cite{frowis_measures_2012}. We refer the interested reader to \cite{frowis_measures_2012} for a detailed discussion of measures for macroscopicity and macroscopic superpositions.
What is important in our context is that the results of  \cite{frowis_measures_2012} imply that all superpositions that are macroscopic due to the definition put forward in  \cite{bjork_size_2004,marquardt_measuring_2008} are also macroscopic according to  \cite{korsbakken_measurement-based_2007}. Here we show that \emph{all macroscopic superpositions in the general sense of \cite{korsbakken_measurement-based_2007} are asymptotically incertifiable}.

This result connects to decoherence theory (see \cite{zurek_pointer_1981,zurek_preferred_1993,zurek_decoherence_2003,joos_decoherence_2003,schlosshauer_decoherence:_2007} and references within \cite{zurek_decoherence_2003}), where the (exponential) decay of quantum coherences of e.g. Schrödinger cat states is a well-known phenomenon which is put forward as a potential explanation for the classical appearance of the macroscopic world. This is the basic idea of the environmentally induced selection of classical states. Whereas prior work focused on examples, here we investigate this relation mathematically more rigorously and systematically, in particular adopting a general definition of macroscopic superpositions. As we will show in proposition \ref{pro:macr-superp-unstable} below, we find that there is an inherent connection between macroscopic superpositions and fragility under noise i.e.~the former implies the latter - even though our notion of macroscopic superpositions a prior makes no reference to decoherence processes. 

We start by briefly introducing the relevant definitions of  \cite{korsbakken_measurement-based_2007}. Consider two orthogonal quantum states $\left| \psi_0 \right\rangle, \left| \psi_1 \right\rangle \in \mathcal{H}$. Since they are orthogonal, one can certainly distinguish them via a suitable measurement (in the perfect setting of zero noise, in which we work for the time being). The goal is now to split the qubits into a maximal number of groups such that if one performs measurements on only one of the groups, it is still possible to distinguish $\left| \psi_0 \right\rangle $ from $\left| \psi_1 \right\rangle $ with high probability $1-\epsilon$ (for the value of $\epsilon$, see the discussion in \ref{sec:fagil-macr-superp}). This maximal number of groups is then called ``effective size'' $N_{\mathrm{eff}}$. The most conservative definition is then to call a superposition $\left| \psi \right\rangle \propto \left| \psi_0 \right\rangle + \left| \psi_1 \right\rangle $ macroscopic if $N_{\mathrm{eff}} = O(N)$, that is, the average group size is independent of the system size $N$. A famous example of a macroscopic superposition is the GHZ state with $\left| \psi_0 \right\rangle  = \left| 0 \right\rangle ^{\otimes N}$ and $\left| \psi_1  \right\rangle = \left| 1 \right\rangle ^{\otimes N}$. There, it suffices to measure one particle (with $\sigma_z$) to distinguish between $\left| \psi_0 \right\rangle $ and $\left| \psi_1 \right\rangle $ perfectly; therefore, $N_{\mathrm{eff}} = N$.

Next we prove that all macroscopic superpositions in the above sense are asymptotically incertifiable. First we deal with the case of perfect distinction (i.e., $\epsilon = 0$).

 \begin{Pro}
\label{pro:macr-superp-unstable}
Let $\left| \psi \right\rangle  = 1/\sqrt{2} \left( \left| \psi_0 \right\rangle + \left| \psi_1 \right\rangle  \right) \in \mathcal{H}$ be a normalised quantum state with $\langle \psi_0 | \psi_1\rangle  = 0$. Assume that one can partition the $N$ qubits into $N_{\mathrm{eff}} \in \left\{ 1,\dots,N \right\}$ groups such that measuring one group allows one to distinguish perfectly between $\left| \psi_0 \right\rangle $ and $\left| \psi_1 \right\rangle $. Then, one has
\begin{equation}
\label{eq:16}
C(\psi) \leq q^{N_{\mathrm{eff}}}
\end{equation}
with \be q = 1-(1-p)^{N/N_{\mathrm{eff}}}.\ee If $N_{\mathrm{eff}} = O(N)$---that is, if $\left| \psi \right\rangle $ is a macroscopic superposition---then $\left| \psi \right\rangle $ is asymptotically incertifiable.
\end{Pro}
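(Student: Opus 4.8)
The plan is to generalise the GHZ computation of Example~\ref{Ex:GHZ}. Since $|\psi_0\rangle$ and $|\psi_1\rangle$ are orthonormal, the vector $|\psi^{\perp}\rangle := \frac{1}{\sqrt{2}}(|\psi_0\rangle - |\psi_1\rangle)$ is orthogonal to $|\psi\rangle$ and is itself a macroscopic superposition with the same distinguishing partition; hence $C(\psi) \leq D[\mathcal{E}(\psi),\mathcal{E}(\psi^{\perp})]$. Expanding the projectors gives $|\psi\rangle\!\langle\psi| - |\psi^{\perp}\rangle\!\langle\psi^{\perp}| = |\psi_0\rangle\!\langle\psi_1| + |\psi_1\rangle\!\langle\psi_0|$, so by the triangle inequality and $\lVert\mathcal{E}(X^{\dagger})\rVert_1 = \lVert\mathcal{E}(X)\rVert_1$ it suffices to show $\lVert\mathcal{E}(|\psi_0\rangle\!\langle\psi_1|)\rVert_1 \leq q^{N_{\mathrm{eff}}}$.

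The core of the argument is a per-group contraction estimate. Let $G_1,\dots,G_{N_{\mathrm{eff}}}$ be the distinguishing partition with $m_j := |G_j|$, so that $\mathcal{E} = \mathcal{E}_{G_1}\otimes\dots\otimes\mathcal{E}_{G_{N_{\mathrm{eff}}}}$, where $\mathcal{E}_{G_j}$ is the product of the single-qubit depolarisation channels on $G_j$. Writing each single-qubit channel as $p\,\mathrm{id} + (1-p)\mathcal{R}$, with $\mathcal{R}$ replacing the qubit by $\mathbbm{1}/2$, and multiplying out, one gets $\mathcal{E}_{G_j}(X) = \sum_{S\subseteq G_j} p^{m_j-|S|}(1-p)^{|S|}\,\mathrm{Tr}_S(X)\otimes \frac{\mathbbm{1}_S}{2^{|S|}}$. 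Now comes the key point: perfect distinguishability of $|\psi_0\rangle$ and $|\psi_1\rangle$ by a measurement on $G_j$ forces the reduced states of $|\psi_0\rangle$ and $|\psi_1\rangle$ on $G_j$ to have orthogonal supports. Letting $P_j$ be the projector onto the support of the first, this means $(P_j\otimes\mathbbm{1})|\psi_0\rangle = |\psi_0\rangle$ and $(P_j\otimes\mathbbm{1})|\psi_1\rangle = 0$, whence, using cyclicity of the partial trace over $G_j$, $\mathrm{Tr}_{G_j}(|\psi_0\rangle\!\langle\psi_1|) = \mathrm{Tr}_{G_j}\big(|\psi_0\rangle\!\langle\psi_1|(P_j\otimes\mathbbm{1})\big) = 0$. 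Consequently the $S=G_j$ term---the unique term carrying the large weight $(1-p)^{m_j}$---drops out of $\mathcal{E}_{G_j}(|\psi_0\rangle\!\langle\psi_1|)$, while every surviving term is trace-norm contracting; hence applying $\mathcal{E}_{G_j}$ multiplies the trace norm by at most $\sum_{S\subsetneq G_j} p^{m_j-|S|}(1-p)^{|S|} = 1-(1-p)^{m_j}$.

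I would then apply the group channels successively. Set $M_0 := |\psi_0\rangle\!\langle\psi_1|$ and $M_j := (\mathcal{E}_{G_1}\otimes\dots\otimes\mathcal{E}_{G_j})(M_0)$. Since $\mathrm{Tr}_{G_{j+1}}$ commutes with the earlier channels (disjoint qubits) and annihilates $M_0$, one has $\mathrm{Tr}_{G_{j+1}}(M_j) = 0$, so the estimate of the previous paragraph applies verbatim at each step and gives $\lVert M_{j+1}\rVert_1 \leq (1-(1-p)^{m_{j+1}})\,\lVert M_j\rVert_1$. Iterating, $\lVert\mathcal{E}(|\psi_0\rangle\!\langle\psi_1|)\rVert_1 \leq \prod_{j}(1-(1-p)^{m_j})$, which equals $q^{N_{\mathrm{eff}}}$ when all $m_j = N/N_{\mathrm{eff}}$; for unequal group sizes, AM--GM together with convexity of $t\mapsto(1-p)^t$ gives $\prod_j(1-(1-p)^{m_j}) \leq q^{N_{\mathrm{eff}}}$ as well. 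This proves $C(\psi)\leq q^{N_{\mathrm{eff}}}$. Finally, in the macroscopic case $N_{\mathrm{eff}}$ scales linearly with $N$, so $N/N_{\mathrm{eff}} = O(1)$, the base $q = 1-(1-p)^{N/N_{\mathrm{eff}}}$ stays bounded away from $1$ by a positive constant, and $q^{N_{\mathrm{eff}}}$ decays exponentially in $N$---in particular faster than any inverse polynomial---so $|\psi\rangle$ is asymptotically incertifiable.

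The only genuinely nontrivial step, and the one I expect to be the crux, is this key observation: perfect one-group distinguishability implies orthogonal supports of the corresponding one-group reduced states, hence $\mathrm{Tr}_{G_j}(|\psi_0\rangle\!\langle\psi_1|)=0$. The rest is careful bookkeeping with the depolarisation-channel expansion, trace-norm contractivity of channels and partial traces, and (for the asymptotics) the convexity estimate handling unequal group sizes. A further routine extension---not needed for the exponential-decay conclusion---is to pass from perfect distinguishability to the $\epsilon>0$ case of the full proposition, by replacing the exact relation $\mathrm{Tr}_{G_j}(|\psi_0\rangle\!\langle\psi_1|)=0$ with a bound controlled by $\epsilon$.
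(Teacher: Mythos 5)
Your proof is correct, and while it follows the paper's overall skeleton---pass to $|\psi^{\perp}\rangle=(|\psi_0\rangle-|\psi_1\rangle)/\sqrt{2}$, reduce everything to bounding $\lVert\mathcal{E}(|\psi_0\rangle\!\langle\psi_1|)\rVert_1$, and exploit that the partial trace over any distinguishing group annihilates $|\psi_0\rangle\!\langle\psi_1|$---the mechanism by which you extract the factor $q$ per group is genuinely different. The paper factorises the local noise as $\mathcal{E}=(\tilde{\mathcal{E}}\circ\mathcal{E}_g)^{\otimes N_{\mathrm{eff}}}$, where $\mathcal{E}_g$ is a group-depolarisation channel with parameter $q$, proves in \ref{sec:lemma-proof-prop} that the correction map $\tilde{\mathcal{E}}$ is completely positive so that trace-norm contractivity reduces the problem to $\mathcal{E}_g^{\otimes N_{\mathrm{eff}}}$, for which $\mathcal{E}_g^{\otimes N_{\mathrm{eff}}}(|\psi_0\rangle\!\langle\psi_1|)=q^{N_{\mathrm{eff}}}|\psi_0\rangle\!\langle\psi_1|$ holds exactly. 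You instead expand the exact product channel on each group over subsets of depolarised qubits, observe that the unique term carrying the weight $(1-p)^{m_j}$ is killed because $\mathrm{Tr}_{G_j}(|\psi_0\rangle\!\langle\psi_1|)=0$, and bound all surviving terms by the triangle inequality together with contractivity of the trace-and-replace maps, processing the groups sequentially; this avoids the auxiliary correction map and its cp-ness lemma entirely, and you additionally treat unequal group sizes explicitly via AM--GM plus convexity of $t\mapsto(1-p)^t$, a point the paper dismisses as straightforward. Your route to the key vanishing fact also differs slightly: the paper constructs a $\pm1$-valued group observable $A$ with $\tilde{A}|\psi_i\rangle=(-1)^i|\psi_i\rangle$ and uses $A^2=\mathbbm{1}$ with cyclicity, whereas you argue via orthogonal supports of the reduced states on the group and a support projector; the two are equivalent, and your formulation makes transparent that the statement concerns the trace over the \emph{whole} group (the paper's corresponding display is phrased, somewhat confusingly, in terms of single-qubit traces, for which the claim would not hold in general). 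What the paper's coarser $\mathcal{E}_g$ formulation buys in return is an exact equality at the level of group noise, which is then reused in \ref{sec:fagil-macr-superp} for the extension to imperfect distinguishability $\epsilon>0$.
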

\begin{proof}
It suffices to show that the trace distance between the noisy $\left| \psi \right\rangle $ and the noisy orthogonal state \be | \psi^{\bot} \rangle  := \frac{1}{\sqrt{2} } \left( \left| \psi_0 \right\rangle - \left| \psi_1 \right\rangle  \right)\ee is upper bounded by $q^{N_{\mathrm{eff}}}$. One has
\begin{equation}
\label{eq:17}
\begin{split}
 C(\psi) &\leq \frac{1}{2} \lVert \mathcal{E} (  | \psi
 \rangle\!\langle \psi | -  | \psi^{\bot}
 \rangle\!\langle \psi^{\bot} |  ) \rVert_1 \\ &
    =  \frac{1}{2} \lVert \mathcal{E} (  | \psi_0 \rangle\!\langle \psi_1 | +  | \psi_1 \rangle\!\langle \psi_0 |   ) \rVert_{1} \\ & \leq \lVert \mathcal{E} (  | \psi_0 \rangle\!\langle \psi_1 |    ) \rVert_{1},
 \end{split}
 \end{equation}
  which means that the decay of the off-diagonal element gives an upper bound on the certifiability of $\left| \psi \right\rangle $.
 To avoid an overloaded notation, we assume that the size of every group equals $m = N/N_{\mathrm{eff}}$. The generalisation to different group sizes is straightforward. We now introduce for every group a ``group-local'' depolarisation noise $\mathcal{E}_g$ which acts on a group of $m$ qubits as \be \mathcal{E}_g(\rho) = q \rho + (1-q)\  \mathrm{Tr}\rho \ \left( \frac{\mathbbm{1}}{2} \right)^{\otimes m}.\ee Observe that we can rewrite the local depolarisation channel as \be \mathcal{E}  \equiv \mathcal{E}^{(1)} \otimes \dots \otimes \mathcal{E}^{(N)}= \left( \mathcal{\tilde{E}} \circ \mathcal{E}_g \right)^{\otimes N_{\mathrm{eff}}}\ee with the ``correction'' map $\mathcal{\tilde{E}}$ acting on a group of $m$ qubits as
\begin{equation}
\mathcal{\tilde{E}}(\rho) =
\frac{1}{1-(1-p)^m}\sum_{k=0}^{m-1}p^{m-k}(1-p)^k  \sum_{i_1 <
   \dots < i_k} \mathrm{Tr}_{i_1,\dots,i_k} \rho \otimes
\frac{\mathbbm{1}^{\otimes k}}{2^k}.\label{eq:18}
\end{equation}
In \ref{sec:lemma-proof-prop}, it is shown that $\mathcal{\tilde{E}}$ is a cp map. This property allows to invoke the contractivity of the trace norm, which implies that the trace norm of the off-diagonal element of  (\ref{eq:17}) does not decrease if we only consider $\mathcal{E}_g^{\otimes N_{\mathrm{eff}}}$ instead of $\mathcal{E}$. In mathematical terms, using Theorem 9.2 of  \cite{nielsen_quantum_2010} we find
\begin{equation}
\label{eq:19}
 \lVert \mathcal{E}\left( \left| \psi_0 \rangle\!\langle \psi_1\right|   \right) \rVert_{1} \leq \lVert \mathcal{E}_g^{\otimes N_{\mathrm{eff}}}\left( \left| \psi_0 \rangle\!\langle \psi_1\right|   \right) \rVert_{1}.
\end{equation}

 Let us now consider the action of $\mathcal{E}_g$ on one group (without loss of generality the first $m$ qubits) and in particular the effect of tracing out all qubits of this single group. Since $\left| \psi_0 \right\rangle $ and $\left| \psi_1 \right\rangle $ are perfectly distinguishable by measuring this group, there must exist an $m$-qubit observable $A$ such that $A\otimes \mathbbm{1}^{\otimes N-m} \left| \psi_i \right\rangle \equiv \tilde{A} \left| \psi_i \right\rangle = (-1)^i \left| \psi_i \right\rangle $ for $i=0,1$. In particular, we can construct $A$ such that the only eigenvalues of $A$  are $1$ and $-1$, which implies that $A^2 = I$. Then, one easily verifies that, for every $j = \left\{ 1,\dots,m \right\}$, one has  
 \begin{equation}
 \label{eq:136}
\begin{split}
\mathrm{Tr}_j (\left| \psi_0 \rangle\! \langle \psi_1\right|) & =
\frac{1}{2}\mathrm{Tr}_j (\left| \psi_0 \rangle\!\langle
\psi_1\right|) + \frac{1}{2}\mathrm{Tr}_j (\left| \psi_0
\rangle\!\langle \psi_1\right|)\\ &
  = \frac{1}{2}\mathrm{Tr}_j (\left| \psi_0 \rangle\!\langle
 \psi_1\right|) + \frac{1}{2}\mathrm{Tr}_j (\tilde{A}\left| \psi_0 \rangle\!\langle
  \psi_1\right|\tilde{A})\\ &
  = \frac{1}{2}\mathrm{Tr}_j (\left| \psi_0 \rangle\!\langle
  \psi_1\right|) - \frac{1}{2}\mathrm{Tr}_j (\left| \psi_0 \rangle\!\langle
  \psi_1\right|) = 0.
 \end{split}
\end{equation}
 This shows that tracing out any of the $N_{\mathrm{eff}}$ groups maps the off-diagonal element $\left| \psi_0 \rangle\!\langle \psi_1\right| $ to the zero-operator. The action of the group-local white noise is therefore given by \be \mathcal{E}_g \otimes I (\left| \psi_0 \rangle\!\langle \psi_1\right| ) = q \left| \psi_0 \rangle\!\langle \psi_1\right| \ee where $I$ is the identity operation on the rest of the space. Putting everything together, we find
 \begin{equation}
\label{eq:20}
C(\psi)\leq\lVert \mathcal{E}_g^{\otimes N_{\mathrm{eff}}}\left( \left| \psi_0 \rangle\!\langle \psi_1\right|   \right) \rVert_{1} = q^{N_{\mathrm{eff}}}.
\end{equation}
\end{proof}

   Note that we have thus shown that a large-scale coherent macroscopic superposition $ ( | \psi_0\rangle + | \psi_1\rangle)/\sqrt{2}$ is effectively indistinguishable from the incoherent mixture \be \frac{1}{2}| \psi_0\rangle\langle\psi_0| + \frac{1}{2}| \psi_1\rangle\langle\psi_1|\ee in the presence of small amounts of local white noise. 

 Also, together with proposition \ref{pro:unique-ground-states} we conclude that no macroscopic superposition can be the unique ground state of a local, gapped Hamiltonian.

   Apart from the GHZ state, there are other quantum states that are macroscopic superpositions in the sense of  \cite{korsbakken_measurement-based_2007}; all of which are also incertifiable. We summarise two instances.
\begin{Exp}
                                                                                                  \label{ex:macr-superp}
Consider a local graph state $\left| G \right\rangle $ \cite{hein_entanglement_2005}, that is, a graph state where the neighbourhood of every vertex is independent of the system size. The superposition $\left| G \right\rangle + \sigma_z^{\otimes N} \left| G \right\rangle $ is macroscopic because one can locally distinguish between $\left| G \right\rangle $ and $\sigma_z^{\otimes N} \left| G \right\rangle $ by measuring the (local) stabiliser operators. The effective size is roughly the system size divided by average weight of the stabilisers. Note that with proposition \ref{pro:unique-ground-states}, this macroscopic superposition is necessarily represented by a nonlocal graph. A concrete example for $\left| G \right\rangle $ are Cluster states \cite{briegel_persistent_2001} (with $N_{\mathrm{eff}} = N/3$ for the one-dimensional Cluster state). Note also that, if $G$, is the empty graph, then $\left| G \right\rangle + \sigma_z^{\otimes N} \left| G \right\rangle $ is the GHZ state (up to a local basis change).

Another instance is the superposition of the W state $\left| N,1 \right\rangle $ with the ``inverted'' W state $\left| N,N-1 \right\rangle $. Measuring one particle gives a success probability of $1-1/N$ for the distinction of those two quantum states. Therefore, the effective size equals $N$ in the limit of large $N$.

\end{Exp}

The situation is more complicated in the case of $\epsilon > 0$. One can show that proposition \ref{pro:macr-superp-unstable} can be generalised to $\epsilon>0$ under the assumption that the measurement on one group does not influences the statistics of a following measurement on an other group. The very same assumption is also used to admit an important and hence undesired role for the value of $\epsilon$, as noted in  \cite{frowis_measures_2012}. A detailed discussion and the proof can be found in \ref{sec:fagil-macr-superp}.

We conclude from the results in this section that for large system sizes it becomes infeasible to experimentally verify the generation of a coherent macroscopic superpositions in a strict sense. However, we stress that if we relax the definition of a macroscopic superposition, we can find certifiable states. For example, one can construct a so-called logical GHZ state, where the physical qubits are replaced by logical qubits that are groups of $m$ physical qubits. Define two logical states $\left| 0_L \right\rangle , \left| 1_L \right\rangle $ on each of the group. The logical GHZ state is then $\left| 0_L \right\rangle^{\otimes n} + \left| 1_L \right\rangle^{\otimes n} $, where $n$ is the number of logical groups. If $m = \log n$, the effective size equals $n$ and does not scale linearly with the system size $N = n \log n$. And with a suitable choice of the encoding \cite{frowis_stable_2011}, one can show --among other markers of stability-- the certifiability of the logical GHZ state for $m=\log n$ \footnote{The choice of the encoding made in  \cite{frowis_stable_2011} is $| 0_L \rangle  \propto \left| 0 \right\rangle ^{\otimes m} + \left| 1 \right\rangle ^{\otimes m}$ and $| 1_L \rangle  \propto \left| 0 \right\rangle ^{\otimes m} - \left| 1 \right\rangle ^{\otimes m}$. First, one shows that $\left| 0_L \right\rangle^{\otimes n} + \left| 1_L \right\rangle^{\otimes n} $ is distinguishable from $\left| 0_L \right\rangle^{\otimes n} - \left| 1_L \right\rangle^{\otimes n} $. The certifiability with respect to the remaining Hilbert space is shown by constructing a Hamiltonian as the negative sum of the stabilisers of the logical GHZ state that are at most $m$-local. The number of those stabilisers is $N-1$. Therefore, the ground state space is two-fold degenerated with the two states above as the ground states. Invoking proposition \ref{pro:degenerate-ground-states} of section \ref{sec:inst-with-resp}, one can show that the  certifiability scales at least with $p^m/poly(N) = n^{\log p}/poly(N)$, which is polynomial in $N$.}.

Furthermore, there is hope to handle general ``macroscopic quantum states'' (which is a more general concept than ``macroscopic superpositions'', based on the so-called Quantum Fisher information \cite{helstrom_quantum_1976,braunstein_statistical_1994,holevo_probabilistic_2011}). In  \cite{frowis_measures_2012}, it is argued that there exist quantum states that behave ``macroscopically'' in a similar sense as macroscopic superpositions. However, these quantum states do not exhibit this structure of being a superposition of two ``classical'' states. It is important to note that Dicke states $\left| N,k \right\rangle $ (see Example \ref{exp:DickeStates}) with $k = N/2$ are such general macroscopic quantum states and these states are asymptotically certifiable as shown in Example 3.

In short, to identify stable macroscopic quantum states, proposition \ref{pro:macr-superp-unstable} suggests to either use a sophisticated encoding or to seek for the generation of general macroscopic quantum states.

                                                                                                                                                                                           \section{The confusability index}
 \label{sec:inst-with-resp}

   In the example of the GHZ state, we found that small amounts of noise render  $| \mathrm{GHZ}  \rangle$ indistinguishable from the orthogonal state $ | \mathrm{GHZ}^{\bot}  \rangle$. We say that these two states are (asymptotically) \emph{confusable} under white noise. This raises the question whether the GHZ state is confusable with other states except $ | \mathrm{GHZ}^{\bot}  \rangle$. More generally, we may ask if there exist large sets of mutually confusable states i.e.~are there sets $\{|\psi_1\rangle, \dots, |\psi_n\rangle\}$ of mutually orthogonal states, with $n>2$, such that $D({\cal E}(\psi_i), {\cal E}(\psi_j))$ tends to zero faster-than-polynomially for every $i\neq j$? Such states  would  display a fragility under noise which is  stronger than seen in the case of the GHZ state, since it would be experimentally infeasible to distinguish between any of the $n$ orthogonal  states $|\psi_i\rangle$ in the presence of small amounts of noise. What is more, using an argument similar to the one in section \ref{sec:motiv_defn}, the trace distance between ${\cal E}(\psi_i)$ and any incoherent mixture from the family
   \begin{equation}
{\cal E} \left( \sum_{j=1}^n p_j \left| \phi_j \right\rangle\!\left\langle \phi_j\right|\right) \label{eq:6}
\end{equation}
would tend to zero faster-than-polynomially as well, where $\{p_i\}$ is an arbitrary probability distribution. This holds in particular for the equal mixture where $p_j=1/n$ for all $j$. The latter is a mixed state with entropy $n\log 2$. In this sense, if $n$ grows, the coherence of each of the superpositions $|\psi_i\rangle$ is increasingly fragile.

  In light of the above, we define the \emph{confusability index} of a state $|\psi\rangle$ in the presence of white noise to be the largest integer $n$ such that $|\psi\rangle$ belongs to a set of $n$ mutually orthogonal and mutually confusable states. For example, the GHZ state has confusability index at least 2. Below we argue that it is in fact precisely 2. Before doing so, we give two  examples of states with confusability index strictly larger than 2. These examples are constructed as generalisations of the GHZ state. The first example deals with states with a constant confusability index (i.e.~independent of $N$), the second example features states with superpolynomially large confusability index. It remains an open question whether there exists an $N$- qubit state with confusability index equal to $2^N$. This would amount to  an entire Hilbert space basis (i.e., $2^N$ orthogonal states) such that all basis vectors are mutually confusable.

 \begin{Exp}\label{ex:logGHZ}
Consider system sizes where $N/m \in \mathbbm{N}$ for a fixed $m\in \mathbbm{N}$. Form groups of $m$ qubits and take $\left| i \right\rangle $ ($i = 1,\dots, 2^m$) as the computational basis on this group. Define $\left| \psi_i \right\rangle = \left| i \right\rangle ^{\otimes N/m}$ and superpose these states equally weighted but with different phases such that we get $2^m$ orthogonal states of the form
                                                                                                                                                                                             \begin{equation}
\left| \Psi_k \right\rangle = \frac{1}{\sqrt{2^m}} \sum_{j = 1}^{2^m} e^{i \pi jk/2^{m}} \left| \psi_{j} \right\rangle,\label{eq:21}
\end{equation}
$k = 1,\dots,2^m$. Then, these states are mutually confusable as long as $m = O(1)$, because the coherence terms $\left| \psi_j \rangle\!\langle \psi_k\right| $ vanish exponentially fast with $N/m$ in the presence of white noise. This can be proved using the techniques of the proof of proposition \ref{pro:macr-superp-unstable}. In particular, under the action of small amounts of noise, all of these $2^m$ states are indistinguishable from the mixture $2^{-m}\sum_k \left| \Psi_k \rangle\!\langle \Psi_k\right| $.\end{Exp}

\begin{Exp}\label{ex:prodGHZ}
Define \be | \mathrm{GHZ}_n^{k} \rangle = \frac{1}{\sqrt{2}} (  | 0  \rangle ^{\otimes n} + (-1)^k  | 1  \rangle ^{\otimes n} )\ee and then consider product states of the form
 \begin{equation}
  | \Psi_{k_1,\dots,k_m}  \rangle =  | \mathrm{GHZ}_{N/m}^{k_{1}}  \rangle \otimes\dots \otimes | \mathrm{GHZ}_{N/m}^{k_m} \rangle\label{eq:22}
 \end{equation}
with $k_{l} \in  \{ 0,1  \}$.  If $m=O(N^{\alpha})$ with $0 \leq \alpha <1$, then the states $ | \Psi_{k_1,\dots,k_m}  \rangle$ are mutually confusable. This implies that each of these states has confusability index  $2^m$. In particular, in the presence of small amounts of white noise, each of these states becomes indistinguishable from the mixture \be 2^{-m} (  | 0 \rangle\!\langle 0 | ^{\otimes N/m} + | 1 \rangle\!\langle 1 |^{ \otimes N/m} )^{\otimes m} .\ee This mixture has almost-linear entropy $O(N^{\alpha})$ in the number of qubits $N$ and is thus highly incoherent.
                                                                                                  \end{Exp}
We have thus shown that there exist states with very large confusability index. Next we provide a technique to upper bound the confusability index of a state. The idea is to consider gapped local Hamiltonians which have the state as a---possibly degenerate---ground state:

\begin{Pro}
\label{pro:degenerate-ground-states}
  Let $\left| \psi \right\rangle \in  \mathcal{H}$ be the ground state of a $k$-local Hamiltonian $H$ with $\|H\|\leq $ poly$(N)$ and with ground state energy $E_0$. Suppose that the the ground space is $n$-fold degenerate and suppose that the first exited energy $E_1$ satisfies $|E_1-E_0|\geq $ 1/poly$(N)$. Then the confusability index of $|\psi\rangle$ is at most $n$.
 \end{Pro}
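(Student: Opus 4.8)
\section*{Proof proposal}

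The plan is to argue by contradiction, using the energy gap of $H$ as a witness observable exactly as in the proofs of Example~\ref{Ex:productstate} and Proposition~\ref{pro:unique-ground-states}. Suppose the confusability index of $|\psi\rangle$ were at least $n+1$, so there are mutually orthogonal states $|\psi\rangle=|\psi_1\rangle,|\psi_2\rangle,\dots,|\psi_{n+1}\rangle$ that are pairwise confusable under white noise. Let $P$ be the projector onto the $n$-dimensional ground space $\mathcal{G}$ of $H$. Since $\sum_{j}|\psi_j\rangle\!\langle\psi_j|$ is a rank-$(n+1)$ projector and hence $\preceq\mathbbm{1}$, we get $\sum_{j=1}^{n+1}\langle\psi_j|P|\psi_j\rangle\le\mathrm{Tr}\,P=n$; as $|\psi_1\rangle=|\psi\rangle\in\mathcal{G}$ contributes $1$ to the left-hand side, it follows that $\sum_{j=2}^{n+1}\langle\psi_j|(\mathbbm{1}-P)|\psi_j\rangle\ge1$. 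Thus some index $j_0\in\{2,\dots,n+1\}$ satisfies $\langle\psi_{j_0}|(\mathbbm{1}-P)|\psi_{j_0}\rangle\ge1/n$, i.e.\ $|\psi_{j_0}\rangle$ (which is orthogonal to $|\psi\rangle$) carries a non-negligible weight in the excited subspace.

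I would then translate this into an energy separation. Because $H-E_0\mathbbm{1}$ vanishes on $\mathcal{G}$ and is $\succeq(E_1-E_0)\mathbbm{1}$ on $\mathcal{G}^{\perp}$, one has the operator inequality $H-E_0\mathbbm{1}\succeq\Delta(\mathbbm{1}-P)$ with $\Delta:=|E_1-E_0|$, hence $\langle H\rangle_{\psi_{j_0}}-E_0\ge\Delta/n$, whereas $\langle H\rangle_{\psi}=E_0$. It remains to convert this into a lower bound on $D[\mathcal{E}(\psi),\mathcal{E}(\psi_{j_0})]$ via (\ref{eq:12}). As in the proof of Proposition~\ref{pro:unique-ground-states}, I first pad $H$ to an operator $\tilde H$ on $N+k$ qubits by adjoining $k$ ancillas in $|0\rangle$ and appending $\sigma_z$'s so that every Pauli term of $\tilde H$ has weight exactly $k$; then $\langle\tilde H\rangle$ on the padded states $|\tilde\psi_j\rangle=|\psi_j\rangle\otimes|0\rangle^{\otimes k}$ reproduces $\langle H\rangle$, the trace distance is unchanged, $\|\tilde H\|\le\mathrm{poly}(N)$, and $\mathcal{E}(\tilde H)=p^{k}\tilde H$. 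Using that $\mathcal{E}$ is self-adjoint (its Kraus operators being Hermitian) and applying (\ref{eq:12}) to $\tilde H/\|\tilde H\|$,
\begin{equation}
D[\mathcal{E}(\psi),\mathcal{E}(\psi_{j_0})]\ \ge\ \frac{1}{2\|\tilde H\|}\bigl|\mathrm{Tr}[\mathcal{E}(\tilde H)(|\tilde\psi_1\rangle\!\langle\tilde\psi_1|-|\tilde\psi_{j_0}\rangle\!\langle\tilde\psi_{j_0}|)]\bigr|\ =\ \frac{p^{k}}{2\|\tilde H\|}\bigl(\langle H\rangle_{\psi_{j_0}}-E_0\bigr)\ \ge\ \frac{p^{k}\,\Delta}{2\,n\,\|\tilde H\|}.
\end{equation}
Since $p$ is a fixed constant and $k=O(1)$ (so $p^{k}=\Theta(1)$), while $\Delta\ge1/\mathrm{poly}(N)$, $\|\tilde H\|\le\mathrm{poly}(N)$ and $n\le\mathrm{poly}(N)$, the right-hand side is $\ge1/\mathrm{poly}(N)$. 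This contradicts the assumed confusability of $|\psi\rangle$ and $|\psi_{j_0}\rangle$, which would force $D[\mathcal{E}(\psi),\mathcal{E}(\psi_{j_0})]$ to decay faster than every inverse polynomial. Hence no set of $n+1$ mutually orthogonal, pairwise confusable states containing $|\psi\rangle$ exists, and the confusability index is at most $n$.

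The weight-equalisation step and the commutation of $\mathcal{E}$ past $H$ are routine repetitions of manipulations already in the excerpt, so the one genuinely new ingredient is the pigeonhole argument that forces one competing state into the excited subspace. The point to watch is the factor $1/n$ it produces: the resulting separation is polynomially large precisely when the degeneracy $n$ is itself polynomial in $N$ — which is the regime in which the bound ``confusability index $\le n$'' is informative, and which covers the applications (taking $n=2$ recovers that the GHZ state has confusability index exactly $2$). For super-polynomial $n$ the same argument still goes through whenever $n$ is small compared to the inverse of the (super-polynomially small) confusability distance, and I would state the proposition with that understanding rather than claim the bound is meaningful for arbitrarily large degeneracies.
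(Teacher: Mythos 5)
Your proof is correct and rests on the same core idea the paper's (omitted) proof sketch indicates: use the weight-padded, rescaled Hamiltonian as the distinguishing observable, exactly as in Proposition~\ref{pro:unique-ground-states}, so that states confusable with $|\psi\rangle$ are forced towards the ground space, whose dimension then caps the confusability index. Where you genuinely go beyond the paper is the pigeonhole step: the paper argues that any orthogonal state with energy at least $E_1$ is distinguishable and concludes that only ground-space states can be confusable, which taken literally overlooks orthogonal states that are superpositions with only a small excited-space component (such a state has mean energy close to $E_0$ yet lies outside the ground space). Your rank argument, $\sum_j\langle\psi_j|P|\psi_j\rangle\le\mathrm{Tr}\,P=n$ forcing some competitor to carry excited weight at least $1/n$, closes exactly this loophole and makes the counting rigorous; the remaining steps (padding to weight $k$, $\mathcal{E}(\tilde H)=p^k\tilde H$, the bound via (\ref{eq:12}) with $\|\tilde H\|\le\mathrm{poly}(N)$) are indeed routine repetitions of the Proposition~\ref{pro:unique-ground-states} machinery. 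The price of the pigeonhole is the factor $1/n$ in the trace-distance lower bound, so your argument delivers the stated conclusion only when $n\le\mathrm{poly}(N)$ (or, as you note, when $n$ is small compared to the inverse of the confusability distance). That caveat is real: the proposition as printed places no restriction on $n$, and the paper's own sketch does not obviously do better in that regime, so you are right to state it explicitly; for the intended applications (e.g.\ $n=2$ for the GHZ state via $-J_z^2$) the bound is fully effective.
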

 The proof is obtained with similar techniques to the proof of proposition \ref{pro:unique-ground-states} and is omitted. The idea is to show that any state orthogonal to $|\psi\rangle$ with energy at least $E_1$ cannot be confusable with $|\psi\rangle$ since a measurement of the energy will be able to distinguish ${\cal E}(\psi)$ from such a state. Therefore, the only states which may be confusable with $|\psi\rangle$ are states in the ground space. Since the degeneracy is $n$, the confusability index of $|\psi\rangle$ is at most $n$.
 Proposition \ref{pro:degenerate-ground-states} can be use to show that the confusability index of  the GHZ state is precisely 2. To this end, consider $-J_z^2$, which is a local operator with a 2-fold ground eigenspace spanned by $|  \mathrm{GHZ}\rangle $ and $|  \mathrm{GHZ}^{\bot}\rangle $. Furthermore the difference between the ground state energy and the first excited energy is constant.

 Finally,  note that it may happen that a quantum state $ | \psi  \rangle $ is not confusable with $ | \phi_1  \rangle $ but $ | \phi_2  \rangle $ but still confusable with $1/\sqrt{2}  (  | \phi_1  \rangle +  | \phi_2  \rangle   )$. For example, define
  \be  | \psi  \rangle &=& \frac{1}{2}(  |00  \rangle^{\otimes \frac{N}{2}} +  |01  \rangle^{\otimes \frac{N}{2}}+  |10  \rangle^{\otimes \frac{N}{2}}+  |11  \rangle^{\otimes \frac{N}{2}} )\nonumber\\                                                                   | \phi_1  \rangle &=& \frac{1}{\sqrt{2}} (  |00  \rangle^{\otimes \frac{N}{2}} -  | 11  \rangle^{\otimes \frac{N}{2}}  ) \\  | \phi_2  \rangle &=& \frac{1}{\sqrt{2}}  (  |01  \rangle^{\otimes \frac{N}{2}} -  | 10 \rangle^{\otimes \frac{N}{2}}  ).\nonumber\ee
  Then $ | \psi  \rangle $ is confusable with $1/\sqrt{2}(  | \phi_1  \rangle +  | \phi_2  \rangle )$. However $ | \psi  \rangle $ is not confusable with $ | \phi_1  \rangle $ but $ | \phi_2  \rangle $ because these states can be efficiently distinguished by the local observable $\sum_{i = 1, 3,\dots} \sigma_z^{(i)} \sigma_z^{(i+1)}$.

  Conversely,  if a quantum state $| \psi\rangle $ is confusable with two  states $ | \xi_1  \rangle $ and $ | \xi_2  \rangle $,  this does not necessarily mean that $ | \psi  \rangle $ is confusable with $  (  | \xi_1 \rangle +  | \xi_2  \rangle   )/\sqrt{2}$. For example, the state $\left| \psi \right\rangle $ in the example above is confusable with $| \xi_1 \rangle  = (\left(  \left| \phi_1 \right\rangle + \left| \phi_2 \right\rangle \right)/\sqrt{2} $ and $| \xi_2 \rangle  =  \left(  \left| \phi_1 \right\rangle - \left| \phi_2 \right\rangle \right)/\sqrt{2}$, but not with  $( | \xi_1 \rangle + | \xi_2 \rangle)/\sqrt{2} = \left| \phi_1 \right\rangle $.

  \section{$\epsilon$-extension of fragility criterion}
  \label{sec:stab-neighb-inst}
  It is straightforward to extend our fragility criterion to a situation where one considers not only an initial pure state, but a certain neighbourhood of the state. This is similar in spirit to the study of smoothed entanglement measures \cite{mora_epsilon-measures_2008}. To this aim, one considers an $\epsilon$-ball around the initial state $|\psi\rangle$, i.e., the set of all density operators with trace distance $< \epsilon$ to $|\psi\rangle\langle\psi|$. More precisely, one identifies $| \psi \rangle $ with its $\epsilon$-ball environment. This is physically motivated by the fact that a given measurement apparatus with a certain accuracy and a restricted number of measurements has some finite accuracy $\delta$. Even without taking noise or decoherence into account, such a measurement will not be able to distinguish states within such an $\epsilon$ ball as long as $\epsilon < \delta$, so that in such a situation all states in the $\epsilon$-ball are indistinguishable. Similarly, for any state $|\phi\rangle$ orthogonal to $|\psi\rangle$ one considers an $\epsilon$-ball around it.

The state $|\psi\rangle$ is then confusable with $|\phi\rangle$ in this smooth sense if---after the action of noise ${\cal E}$---there exist states in the noisy $\epsilon$-balls around $|\phi\rangle$ and $|\psi\rangle$ which are $\delta$-close to each other. Notice that completely positive maps are contractive, i.e., the radius of the $\epsilon$-balls in fact shrinks.
This implies that if a pure state is certifiable with $C>\delta$ and $C>2\epsilon$, the $\epsilon$-ball around $| \psi \rangle $ is also certifiable in the sense that we can experimentally exclude the $\epsilon$-ball environment of any initially orthogonal state. On the other hand, states within the immediate neighbourhood of an incertifiable state (i.e., the $\epsilon$-ball of $| \psi \rangle $ contains an incertifiable state) are then considered as fragile in this smooth sense.

  \section{Summary and conclusions}
\label{sec:conclusion-summary}
We have introduced a fragility criterion that is based on the distinguishability of orthogonal states in the presence of small amounts of noise $\mathcal{E}$ and finite measurement precision $\delta$. The basic idea is that if a quantum state $\left| \psi \right\rangle $ is indistinguishable from an orthogonal state $| \psi^{\bot} \rangle $ in this sense, it is also indistinguishable from any incoherent mixture $a \left| \psi \rangle\!\langle \psi\right| + (1-a) | \psi^{\bot} \rangle\!\langle \psi^{\bot}| $, that is, the coherence of $\left| \psi \right\rangle $ is not fully certifiable in the experiment. This general principle was pursued for local depolarisation channels, where we were in particular interested in the asymptotic behaviour of states defined on large systems. %
Our investigation is similar in spirit to other proposals aimed at characterizing the instability or fragility of certain quantum states. In particular, our work connects to the program of explaining the appearance of the classical (macroscopic) world out of quantum mechanics with the help of decoherence. A notable feature of our contribution is that the fragility criterion proposed in this paper gives rise to qualitative differences between various classes of many-body qubit states (e.g. product states versus GHZ states). This is in contrast to other quantities like fidelity or entropy, which behave similarly for many state families, especially when considering noise processes which map any state to a full rank density operator (such as local white noise, as considered in this paper). For example, the entropy scaling of a product state and a GHZ state under the action of local white noise will be qualitatively very similar, in both cases growing linearly with the system size. 

 We found that product states are asymptotically certifiably, that is, one can distinguish them from noisy orthogonal states with a finite measurement precision and in the presence of a small amount of white noise. More generally, we proved that all unique ground states of local, gapped Hamiltonians are certifiable. This includes interesting classes of states like Dicke states and graph states with a bounded degree, such as cluster states.

 In contrast, the entire class of macroscopic superpositions was shown to be asymptotically incertifiable. If $\left| \psi \right\rangle \propto \left| \psi_0 \right\rangle + \left| \psi_1 \right\rangle $ is macroscopic (where macroscopicity is defined relative to several measures currently used the literature), the application of the individual depolarisation channels render this quantum state indistinguishable from $\left| \psi_0 \rangle\!\langle \psi_0\right| + \left| \psi_1 \rangle\!\langle \psi_1\right| $. Since the states $\left| \psi_0 \right\rangle$ and $\left| \psi_1 \right\rangle $ do not have any ``macroscopic quantum properties'', the ``macroscopicity'' of $\left| \psi \right\rangle $ can not be certified in a realistic scenario. As for experiments, this implies that for large $N$ it is in practice impossible to certify the generation of, for instance, the GHZ state. However, as discussed in section \ref{sec:asymptotic_certifiability}, for any fixed system size, the certifiability of any Schr\"odinger cat state can be increased by using a sophisticated encoding to protect (to a certain extent) the quantum information of the state. This is in the same spirit as error correction for quantum computation. In addition, more general macroscopic quantum states that are not a superposition of exactly two ``classical'' states may in principle still exhibit stability i.e.~such states are a priori not covered by our results.

Finally we constructed GHZ-like examples of state families that are confusable with (super-polynomial) many orthogonal states. 
 \ack 
 This research was funded by the Austrian Science Fund (FWF): P24273-N16, F40-FoQus F4012-N16 and J3462.

\appendix
\section{Basic terminology}
\textit{Order---} Throughout this paper, the order of a quantity is always understood as with respect to the system size $N$. We use the following definitions: $f(N) = O(N^x) :\Leftrightarrow \lim_{N\rightarrow \infty}f(N)/N^x$ exists and is strictly greater than zero; $f(N) = o(N^x) :\Leftrightarrow \lim_{N\rightarrow \infty}f(N)/N^x$ exists and equals zero.

\textit{Trace norm---} The trace norm $\lVert \cdot \rVert_1$ of a linear operator $A$ is defined as $\lVert A \rVert_1 \mathrel{\mathop:}= \mathrm{Tr}\left| A \right| \equiv \mathrm{Tr}\sqrt{A A^{\dagger}} $ or, equivalently, as the sum of its singular values. If $A$ is hermitian, the trace norm equals the sum of the moduli of the eigenvalues of $A$. Therefore, the trace norm of any quantum state is unity. The trace norm is multiplicative, that is, $\lVert A\otimes B \rVert_1 = \lVert A \rVert_1 \lVert B \rVert_1$.

\textit{Pauli operator---} The three Pauli operators $\left( \sigma_x, \sigma_y,\sigma_z \right) \equiv \left( \sigma_1, \sigma_2, \sigma_3 \right): \mathbbm{C}^2 \rightarrow \mathbbm{C}^2$ are linear, hermitian and traceless operators with spectra $\left\{ -1,1 \right\}$ that fulfil the commutation relations $\left[ \sigma_j, \sigma_k \right] = 2i \epsilon_{jkl} \sigma_l$ and form (together with the identity operator $\mathbbm{1}\equiv \sigma_0$) an orthogonal basis of the space of complex two-by-two matrices.

\textit{Pauli basis---}In the space of all complex matrices that map elements of $\mathbbm{C}^{2 \otimes N}$ onto $\mathbbm{C}^{2 \otimes N}$, one can use the Pauli basis that consists of tensor products of the Pauli operators and the identity, that is, $\sigma_{i_1} \otimes \dots \otimes \sigma_{i_N}$ with $i_k \in \left\{ 0,\dots,3 \right\}$ for all $k$.

\textit{Weight---} Given an operator $\sigma = \sigma_{i_1} \otimes \dots \otimes \sigma_{i_{N}}$ of the Pauli basis, its weight $w(\sigma)$ is defined as the number of matrices in the tensor product which are different from the identity. More formally,  \be w(\sigma)= |\{k\in\{1, \dots, N\}: i_k\neq 0\}|.\ee

\section{Lemma for proof of proposition \ref{pro:macr-superp-unstable}}
                                                                                                  \label{sec:lemma-proof-prop}
The following lemma is used to prove proposition \ref{pro:macr-superp-unstable}.
\begin{Lem}
 \label{sec:stab-stat-with--Lem2}
For every $p \in (0,1]$ and $m \in \mathbbm{N}$, the linear map
\begin{equation}
\begin{split}
   \tilde{\mathcal{E}}: \mathcal{D}(\mathbbm{C}^{\otimes m})&\rightarrow \mathcal{D}(\mathbbm{C}^{\otimes m})\\
\rho &\mapsto \frac{1}{1-(1-p)^m}\sum_{k=0}^{m-1}p^{m-k}(1-p)^k \sum_{i_1 < \dots < i_k} \mathrm{Tr}_{i_1,\dots,i_k} \rho \otimes \frac{\mathbbm{1}^{\otimes k}}{2^k}
 \end{split}
 \label{eq:92}
 \end{equation}
 is a cp map.
 \end{Lem}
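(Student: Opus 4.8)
The plan is to write $\tilde{\mathcal{E}}$ as a non-negative linear combination of manifestly completely positive maps; complete positivity then follows at once, since the set of cp maps is a convex cone and hence closed under non-negative linear combinations. For a subset $S=\{i_1<\dots<i_k\}\subseteq\{1,\dots,m\}$ let $\Phi_S$ denote the ``erasure'' map that traces out the qubits indexed by $S$ and reinserts the maximally mixed state in each of those slots, i.e.\ $\Phi_S(\rho)=\mathrm{Tr}_{i_1,\dots,i_k}\rho\otimes\mathbbm{1}^{\otimes k}/2^k$, with the $k$ identity factors placed in positions $i_1,\dots,i_k$. Then the map $\tilde{\mathcal{E}}$ in question is exactly $\tilde{\mathcal{E}}=\sum_{k=0}^{m-1}c_k\sum_{|S|=k}\Phi_S$ with weights $c_k=p^{m-k}(1-p)^k/\bigl(1-(1-p)^m\bigr)$.

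First I would check that each $\Phi_S$ is completely positive (in fact a channel): the partial trace $\mathrm{Tr}_S$ is cp, adjoining the fixed state $\mathbbm{1}^{\otimes k}/2^k$ in the slots $S$ is cp, and a composition of cp maps is cp. Equivalently, one may simply exhibit the Kraus operators $\{\,2^{-k/2}\,|b\rangle\!\langle a|_S\otimes\mathbbm{1}_{S^c}:a,b\in\{0,1\}^k\}$, which reproduce $\Phi_S$ and from which complete positivity (and, incidentally, trace preservation) is immediate. Next I would check that the coefficients are non-negative: for $p\in(0,1]$ one has $p\geq 0$ and $1-p\geq 0$, so every numerator $p^{m-k}(1-p)^k\geq 0$, while $0\leq 1-p<1$ gives $(1-p)^m<1$, so the common denominator $1-(1-p)^m$ is strictly positive. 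Hence $c_k\geq 0$ for all $k$, and $\tilde{\mathcal{E}}$ is a non-negative combination of cp maps, therefore cp. (If trace preservation is also wanted, it follows from the binomial identity $\sum_{k=0}^{m}\binom{m}{k}p^{m-k}(1-p)^k=1$, which shows that the coefficients multiplying the trace-preserving maps $\Phi_S$ add up to $1$; this is not needed for the lemma itself.)

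There is no genuine obstacle in this argument. The one feature that might look worrying is the prefactor $1/(1-(1-p)^m)$ together with the fact that $\tilde{\mathcal{E}}$ plays the role of a ``correction map'' that partially undoes $\mathcal{E}_g$ in the decomposition $\mathcal{E}\equiv(\tilde{\mathcal{E}}\circ\mathcal{E}_g)^{\otimes N_{\mathrm{eff}}}$ — one might a priori expect such an inverse-like operation to fail complete positivity. What trivialises the lemma is the observation that this prefactor is positive and multiplies a sum in which every term carries a non-negative weight, so no cancellation against a cp map ever occurs; the only thing to keep straight is the bookkeeping of which tensor slots receive the reinserted $\mathbbm{1}/2$ factors, which is handled automatically in the Kraus-operator description.
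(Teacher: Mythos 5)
Your proof is correct. It differs in execution from the paper's: the paper establishes the lemma by exhibiting a single explicit operator-sum representation for the whole map, with Pauli-type Kraus operators $K_{\mathbf j}=\sqrt{c_{\mathbf j}}\,\sigma_{j_1}^{(i_1)}\cdots\sigma_{j_k}^{(i_k)}$ built from the identity $\mathrm{Tr}_i\rho\otimes\tfrac{\mathbbm 1}{2}=\tfrac14\sum_{j=0}^3\sigma_j^{(i)}\rho\,\sigma_j^{(i)}$, and then verifies $\sum_{\mathbf j}K_{\mathbf j}^{\dagger}K_{\mathbf j}=\mathbbm 1^{\otimes m}$ via the binomial theorem, so it gets trace preservation in the same stroke. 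You instead argue structurally: $\tilde{\mathcal E}$ is a conic combination $\sum_k c_k\sum_{|S|=k}\Phi_S$ of erasure channels $\Phi_S$ (each cp as a composition of partial trace and tensoring with a fixed state, or via the flip-type Kraus operators $2^{-k/2}|b\rangle\!\langle a|_S\otimes\mathbbm 1_{S^c}$), the coefficients $c_k$ are nonnegative because $1-(1-p)^m>0$ for $p\in(0,1]$, and the cp maps form a convex cone. Both arguments hinge on the same observation -- the prefactor never produces a negative weight, so no inverse-like cancellation occurs -- but yours is shorter and avoids the Pauli bookkeeping, at the price of relegating trace preservation (which the paper's calculation delivers automatically, and which is what the decomposition $\mathcal E=(\tilde{\mathcal E}\circ\mathcal E_g)^{\otimes N_{\mathrm{eff}}}$ implicitly relies on) to a side remark; your remark that it follows from $\sum_{k=0}^{m}\binom{m}{k}p^{m-k}(1-p)^k=1$ after dropping the $k=m$ term is nevertheless accurate, so nothing is missing.
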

 \begin{proof}
 According to  \cite[Thm 8.1]{nielsen_quantum_2010}, it suffices to show that one can write  (\ref{eq:92}) as an operator sum, that is, $\tilde{\mathcal{E}}(\rho) = \sum_j K_j \rho K_j^{\dagger}$ with $\sum_j K_j^{\dagger} K_j = \mathbbm{1}^{\otimes m}$. We prove in the following that this is possible with the choice $K_\mathbf{j} = \sqrt{c_\mathbf{j}}\sigma_{j_1}^{(i_1)} \dots \sigma_{j_k}^{(i_k)}$ where \begin{equation}
 \label{eq:95}
 c_\mathbf{j} = \frac{1}{4^k\left[ 1-(1-p)^m \right]} p^{m-k}(1-p)^k.
 \end{equation} Note that $\mathbf{j}$ is a multi-index defined as $\mathbf{j}(k) \equiv \mathbf{j} = \left( j_1,i_1,\dots, j_k, i_k \right)$ with $j_l \in \left\{ 0,\dots,3 \right\} $ and $1 \leq i_1 < \dots < i_k \leq m$. The index represents a specific choice of $k$ qubits and the respective Pauli operators acting on them.

If $k=1$ in  (\ref{eq:92}), that is, if a single particle is traced out, one can easily verify that $\mathrm{Tr}_i \rho \otimes \mathbbm{1}= \frac{1}{2} \sum_{j=0}^3\sigma_j^{(i)} \rho \sigma_j^{(i)}$. If more particles are traced out, we observe that
\begin{equation}
 \label{eq:93}
   \begin{split}
  &\mathrm{Tr}_{i_1,\dots,i_k} \rho \otimes \mathbbm{1}^{\otimes k} =
   \mathrm{Tr}_{i_1}\left\{ \mathrm{Tr}_{i_2}\left[ \dots  \mathrm{Tr}_{i_k}(\rho) \dots \right] \right\}\otimes  \mathbbm{1}^{\otimes k} =\\ &
\sum_{j_1=0}^3\sigma_{j_1}^{(i_1)}\left\{\sum_{j_2=0}^3\sigma_{j_2}^{(i_2)}\left[ \dots \sum_{j_k=0}^3\sigma_{j_k}^{(i_k)}\rho\sigma_{j_k}^{(i_k)}  \dots \right]\sigma_{j_2}^{(i_2)} \right\}\sigma_{j_1}^{(i_1)} \\ &  =\frac{1}{2^k}  \sum_{j_1,\dots, j_k = 0}^3 \sigma_{j_1}^{i_1}\dots  \sigma_{j_k}^{i_k} \rho \sigma_{j_1}^{i_1}\dots  \sigma_{j_k}^{i_k}  \\ & =\frac{1}{2^k}
 \sum_{j_1,\dots, j_k = 0}^3 \frac{1}{c_j}K_{\mathbf{j}} \rho K_{\mathbf{j}}^{\dagger}.  \end{split}  \end{equation}
Therefore, we have shown that $\tilde{\mathcal{E}}(\rho) = \sum_\mathbf{j} K_\mathbf{j} \rho K_\mathbf{j}^{\dagger}$. We are left to prove $\sum_\mathbf{j} K_\mathbf{j}^{\dagger} K_\mathbf{j} = \mathbbm{1}^{\otimes m}$. First, note that the partial sum $\sum_{\mathbf{j}: k= \mathrm{const}}K_\mathbf{j}^{\dagger} K_\mathbf{j}$ consists of $4^k$ addends which are equal to $c_\mathbf{j} \mathbbm{1}^{\otimes m}$. Therefore  \begin{equation} \label{eq:99}
 \begin{split}
&\sum_\mathbf{j} K_{\mathbf{j}}^{\dagger} K_{\mathbf{j}} = \sum_{k^{\prime}=0}^{m-1}\sum_{\mathbf{j}:  k=k^{\prime} }K_{\mathbf{j}(k)}^{\dagger} K_{\mathbf{j}(k)} = \sum_{k=0}^{m-1}4^{k}c_{\mathbf{j}}  \mathbbm{1}^{\otimes m} =\\ & \frac{1}{ 1-(1-p)^m }\sum_{k=0}^{m-1} p^{m-k}(1-p)^k \mathbbm{1}^{\otimes m}= \mathbbm{1}^{\otimes m}.
 \end{split}\end{equation}
For the last equality, we use the binomial theorem. \end{proof}

  \section{Fragility of macroscopic superpositions with finite success probability}
\label{sec:fagil-macr-superp}
 In this section, we extend proposition \ref{pro:macr-superp-unstable} of section \ref{sec:macr-superp-are} to the case where the success probability $P=1-\epsilon$ is not unity, that is, the measurement of one subgroup does not give us full certainty whether $\left| \psi_0 \right\rangle $ or $\left| \psi_1 \right\rangle $ was present before. For $\epsilon>0$, the situation is more complex than for $\epsilon = 0$.

First of all, it is not clear how the threshold of $\epsilon$ should be set. In  \cite{frowis_measures_2012}, an example is discussed, where $N_{\mathrm{eff}} = 2 \epsilon (N+1)$, which admits a prominent and hence undesired role to $\epsilon$ for $N_{\mathrm{eff}}$. There, it has been argued that the additional requirement that the measurements on any group does not influence the measurement outcomes on any other group resolves this problem. Mathematically, let $A^{(i)}$ and $A^{(j)}$ be two measurements that act on any two different groups $i$ and $j$, respectively, and distinguish optimally between $\left| \psi_0 \right\rangle $ and $\left| \psi_1 \right\rangle $. We then demand that $\langle A^{(i)} A^{(j)} \rangle_{\psi_k} = \langle A^{(i)} \rangle_{\psi_k} \langle A^{(j)} \rangle_{\psi_k} + o(1)$ for $k=0,1$ and for almost all groups $i,j$. It turns out that this requirement is also useful for the extension of proposition \ref{pro:macr-superp-unstable} if $\epsilon>0$.
                                
\begin{Pro}
\label{pro:macr-superp-are-2}
Let $\left| \psi \right\rangle  = 1/\sqrt{2} \left( \left| \psi_0 \right\rangle + \left| \psi_1 \right\rangle  \right) \in \mathcal{H}$ be a macroscopic superposition (with respect to  \cite{korsbakken_measurement-based_2007}) with $N_{\mathrm{eff}} = O(N)$. Suppose a success probability to distinguish $\left| \psi_0 \right\rangle $ from $\left| \psi_1 \right\rangle $ of $P = 1-\epsilon$ with $0<\epsilon<1/2$ for any group. Furthermore, the outcome for the optimal measurement of any group shall not influence the measurement of any other group. Then, for $p \in \left( 0,1 \right)$, there exist $q,\epsilon \in \left( 0,1 \right)$ such that
\begin{equation}
\label{eq:24}
C(\psi) \leq \left[ q + \epsilon (1-q) \right]^{N_{\mathrm{eff}}}.\end{equation}
 \end{Pro}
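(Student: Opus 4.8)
The plan is to run the argument of Proposition~\ref{pro:macr-superp-unstable} and keep track of the small residue of the off-diagonal block $|\psi_0\rangle\!\langle\psi_1|$ that now survives when a group is traced out. As in~(\ref{eq:17}) it suffices to bound $C(\psi)$ by the trace distance between the noisy $|\psi\rangle$ and the noisy orthogonal state $|\psi^{\bot}\rangle=(|\psi_0\rangle-|\psi_1\rangle)/\sqrt2$, which the triangle inequality bounds by $\lVert\mathcal{E}(|\psi_0\rangle\!\langle\psi_1|)\rVert_1$. I then reuse the factorisation $\mathcal{E}=(\tilde{\mathcal{E}}\circ\mathcal{E}_g)^{\otimes N_{\mathrm{eff}}}$ with $q=1-(1-p)^{N/N_{\mathrm{eff}}}$ and $\tilde{\mathcal{E}}$ the cp map of~(\ref{eq:18}) (cp by Lemma~\ref{sec:stab-stat-with--Lem2}), so that contractivity of the trace norm gives $\lVert\mathcal{E}(|\psi_0\rangle\!\langle\psi_1|)\rVert_1\le\lVert\mathcal{E}_g^{\otimes N_{\mathrm{eff}}}(|\psi_0\rangle\!\langle\psi_1|)\rVert_1$.

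For $X=|\psi_0\rangle\!\langle\psi_1|$ the group-local depolarisation expands as $\mathcal{E}_g^{\otimes N_{\mathrm{eff}}}(X)=\sum_{S}q^{N_{\mathrm{eff}}-|S|}(1-q)^{|S|}\,\mathrm{Tr}_S(X)\otimes\bigl(\tfrac{\mathbbm{1}}{2}\bigr)^{\otimes m|S|}$, where $S$ runs over subsets of the $N_{\mathrm{eff}}$ groups and $\mathrm{Tr}_S$ traces out the groups in $S$; taking the trace norm and using multiplicativity, $\lVert\mathcal{E}_g^{\otimes N_{\mathrm{eff}}}(X)\rVert_1\le\sum_S q^{N_{\mathrm{eff}}-|S|}(1-q)^{|S|}\,\lVert\mathrm{Tr}_S(X)\rVert_1$. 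The whole proposition then reduces to showing $\lVert\mathrm{Tr}_S(X)\rVert_1\le\epsilon^{|S|}$ (or, more conservatively, $\le c^{|S|}$ for some $c=c(\epsilon)\in(0,1)$), since the binomial theorem gives $\lVert\mathcal{E}_g^{\otimes N_{\mathrm{eff}}}(X)\rVert_1\le[q+\epsilon(1-q)]^{N_{\mathrm{eff}}}$, and $q+\epsilon(1-q)<1$ together with $N_{\mathrm{eff}}=O(N)$ forces a superpolynomial decay, i.e.\ asymptotic incertifiability.

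To estimate $\lVert\mathrm{Tr}_S(X)\rVert_1$ I would use the identity $\lVert\mathrm{Tr}_B(|\psi\rangle\!\langle\phi|)\rVert_1 = F\!\bigl(\mathrm{Tr}_A|\psi\rangle\!\langle\psi|,\ \mathrm{Tr}_A|\phi\rangle\!\langle\phi|\bigr)$, which follows from the dual formula $\lVert\cdot\rVert_1=\max_U|\mathrm{Tr}(U\,\cdot\,)|$ and Uhlmann's theorem (applying a unitary $U$ on $A$ runs over all purifications of the $B$-marginal); here $B=\bigcup_{i\in S}\mathrm{group}_i$, so that $\lVert\mathrm{Tr}_S(X)\rVert_1=F(\rho_0^{(S)},\rho_1^{(S)})$ with $\rho_k^{(S)}$ the reduced state of $|\psi_k\rangle$ on these groups. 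For a single group the success-probability hypothesis $P=1-\epsilon$ together with the Fuchs--van de Graaf inequalities bounds this fidelity by a quantity that vanishes as $\epsilon\to0$; and---this is where the non-influence hypothesis enters---for $|S|>1$ the assumption that the optimal measurements on distinct groups do not disturb one another is what should let one factorise, $F(\rho_0^{(S)},\rho_1^{(S)})\lesssim\prod_{i\in S}F(\rho_0^{(i)},\rho_1^{(i)})$, giving geometric decay in $|S|$. (Equivalently one can imitate~(\ref{eq:136}) directly: split each $|\psi_k\rangle$ via the projective two-outcome measurement $\{P_+^{(i)},P_-^{(i)}\}$ on group $i$ into a dominant part and a part of squared norm $\le\epsilon$; the cross terms vanish because $P_+^{(i)}$ and $P_-^{(i)}$ project onto orthogonal subspaces, and the surviving terms carry only the residual weight.)

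The main obstacle is precisely this factorisation step. Monotonicity of the fidelity under partial trace alone only gives $\lVert\mathrm{Tr}_S(X)\rVert_1\le F(\rho_0^{(i)},\rho_1^{(i)})$ for a single $i\in S$, which is not enough---the resulting bound would saturate at $\epsilon$ rather than decay. Turning the stated two-point non-influence assumption (which concerns correlators $\langle A^{(i)}A^{(j)}\rangle_{\psi_k}$, not the full reduced state, and only holds up to $o(1)$ and for almost all pairs) into a genuine geometric decay of $F(\rho_0^{(S)},\rho_1^{(S)})$ in $|S|$ is the delicate part: one has to discard a vanishing fraction of ``bad'' groups and check that the $o(1)$ residual correlations do not accumulate over the $O(N)$ surviving groups (or the $O(N^2)$ pairs). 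A secondary, purely bookkeeping point is that whatever constant gets absorbed into the effective $\epsilon$ of~(\ref{eq:24}) is harmless, since any per-group factor strictly below $1$ already suffices once $N_{\mathrm{eff}}=O(N)$.
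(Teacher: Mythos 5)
Your skeleton coincides with the paper's: bound $C(\psi)$ by $\lVert\mathcal{E}(|\psi_0\rangle\!\langle\psi_1|)\rVert_1$, pass to $\mathcal{E}_g^{\otimes N_{\mathrm{eff}}}$ via the cp correction map and contractivity as in (\ref{eq:19}), expand in subsets $S$ of traced-out groups, and observe that everything reduces to $\lVert\mathrm{Tr}_S(|\psi_0\rangle\!\langle\psi_1|)\rVert_1\le c^{|S|}$ with some $c<1$, after which the binomial theorem gives (\ref{eq:24}). You also correctly see that a single-group bound (monotonicity alone) is useless here. But that per-subset geometric estimate is precisely the content of the proposition, and your proposal does not prove it: the fidelity route stalls at the multiplicativity $F(\rho_0^{(S)},\rho_1^{(S)})\lesssim\prod_{i\in S}F(\rho_0^{(i)},\rho_1^{(i)})$, which does not follow from monotonicity nor from the two-point correlator form of the non-influence hypothesis, as you yourself note; and the parenthetical ``imitate (\ref{eq:136})'' remark is only a sketch (the cross terms vanish not merely because $\Pi_+^{(i)}\perp\Pi_-^{(i)}$ but because the partial trace over that group lets you cycle them together, and the surviving ``residual weights'' still have to be shown to multiply across groups). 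So as submitted there is a genuine gap at the decisive step.

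The paper closes exactly this step, and in a more elementary way than your fidelity detour: on each traced-out group $j$ insert the resolution $\Pi^{(j)}_{+1}+\Pi^{(j)}_{-1}$ of the optimal two-outcome measurement on both sides of $|\psi_0\rangle\!\langle\psi_1|$; the cross terms die under $\mathrm{Tr}_{j_1,\dots,j_k}$ since $\Pi^{(j)}_l\Pi^{(j)}_{l'}=\delta_{ll'}\Pi^{(j)}_l$; then the non-influence assumption is used in the strong, multi-group form that the joint outcome weights factorise, $\lVert\Pi^{(j_1)}_{l_1}\cdots\Pi^{(j_k)}_{l_k}|\psi_b\rangle\rVert=\prod_i p^{(j_i)}_{b,l_i}$ with $p_{b,l}\in\{1-\epsilon/2,\epsilon/2\}$, so that in every outcome string each group contributes a product in which at least one factor is $\le\epsilon/2$; summing the $2^k$ strings gives $[(1-\epsilon/2)\,\epsilon]^k$, which is simply renamed $\epsilon^k$ --- this renaming is why the statement only claims ``there exist $q,\epsilon\in(0,1)$''; see (\ref{eq:26})--(\ref{eq:27}). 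Your worry that the hypothesis as stated (pairwise, up to $o(1)$, for almost all pairs) may not license this exact all-groups factorisation is a fair criticism of the paper itself, but it does not repair your argument: to obtain the claimed bound you must either assume, as the paper effectively does, independence of the outcome probabilities over arbitrary collections of groups, or supply the discarding-of-bad-groups/error-accumulation analysis you mention, which you have not done.
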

The proof uses some expressions that were introduced in the proof of proposition \ref{pro:macr-superp-unstable}. \begin{proof} 
 Denote the optimal measurement operator on group $i$ by $A^{(i)}$ with spectrum $\left\{ -1,1 \right\}$. The success probability to distinguish $\left| \psi_0 \right\rangle $ from $\left| \psi_1 \right\rangle $ is then given by (\cite{korsbakken_measurement-based_2007}) $P = 1/2 + 1/4 \left| \langle A^{(i)} \rangle_{\psi_0} - \langle A^{(i)} \rangle_{\psi_1} \right|$. Without loss of generality, we assume that $ \langle A^{(i)} \rangle_{\psi_0} = 1-\epsilon$ and $ \langle A^{(i)} \rangle_{\psi_1} = -1+\epsilon$ for all $i$. The projection onto the measurement outcome $l$ is denoted by $\Pi^{(i)}_l$. Then, one finds the probabilities $p_{0,l}^{(i)} = \lVert \Pi_l^{(i)} \left| \psi_0 \right\rangle  \rVert \in \left\{  1- \epsilon/2, \epsilon/2\right\}$ for $l \in \left\{ +1,-1 \right\}$, respectively, and $p_{1,l}^{(i)} \in \left\{\epsilon/2,  1- \epsilon/2\right\}$ for $\left| \psi_1 \right\rangle $. Note that the independence of the measurements lead to $p_{0,l{l^{\prime}}}^{(i,j)} =  \lVert \Pi_l^{(i)}\Pi_{l^{\prime}}^{(j)} \left| \psi_0 \right\rangle  \rVert  = p_{0,l}^{(i)}p_{0,{l^{\prime}}}^{(j)}$ and similarly for $\left| \psi_1 \right\rangle $.

Continuing with  (\ref{eq:19}) from the proof of proposition \ref{pro:macr-superp-unstable}, one has\begin{equation}
                                                                                                      \label{eq:25}
     \begin{split}
       &C(\psi) \leq \lVert \mathcal{E}_g^{\otimes N_{\mathrm{eff}}}(\left| \psi_0
       \rangle\!\langle \psi_1\right| ) \rVert_1 \leq\\ & \sum_{k=0}^{N_{\mathrm{eff}}}
         q^{N_{\mathrm{eff}}-k}(1-q)^k\sum_{j_1 < \ldots < j_k}\lVert\mathrm{Tr}_{j_1,\ldots,j_k}\left| \psi_0 \rangle\!\langle  \psi_1\right|  \rVert_1,  \end{split}
 \end{equation}
 where we now trace out entire groups denoted by $j_l$. Let us consider a single term $ \lVert\mathrm{Tr}_{j_1,\ldots,j_k}\left| \psi_0 \rangle\!\langle \psi_1\right|  \rVert_1$. We insert the twice the identity $\left( \Pi_{+1}^{(j_1)} + \Pi_{-1}^{(j_1)} \right)\dots \left( \Pi_{+1}^{(j_k)} + \Pi_{-1}^{(j_k)} \right)$ within the trace and get
 
\begin{equation}
 \label{eq:26}
 \begin{split}
 & \lVert\mathrm{Tr}_{j_1,\ldots,j_k}\left| \psi_0 \rangle\!\langle
  \psi_1\right| \rVert_1 =\\ &
  \lVert\mathrm{Tr}_{j_1,\ldots,j_k}\left(\sum_{l_1,\dots,l_k =
  \pm 1} \Pi_{l_1}^{(j_1)} \dots \Pi_{l_k}^{(j_k)} \left|
   \psi_0 \rangle\!\langle
  \psi_1\right|\sum_{l^{\prime}_1,\dots,l^{\prime}_k = \pm 1}
  \Pi_{l^{\prime}_1}^{(j_1)} \dots
    \Pi_{l^{\prime}_k}^{(j_k)}\right) \rVert_1 = \\ &
   \lVert\mathrm{Tr}_{j_1,\ldots,j_k}\left(\sum_{l_1,\dots,l_k =
   \pm 1} \Pi_{l_1}^{(j_1)} \dots \Pi_{l_k}^{(j_k)} \left|
   \psi_0 \rangle\!\langle\psi_1\right|\Pi_{l_1}^{(j_1)} \dots
    \Pi_{l_k}^{(j_k)}\right) \rVert_1 \leq\\ &
    \sum_{l_1,\dots,l_k = \pm 1}  \lVert\mathrm{Tr}_{j_1,\ldots,j_k}   (1-\epsilon/2)^L (\epsilon/2)^{k-L} \left| \psi_0^{l_1,\dots,l_k} \right\rangle\! \left\langle  \psi_1^{l_1,\dots,l_k}\right| (1-\epsilon/2)^{k-L} (\epsilon/2)^{L}\rVert_1 = \\ & \left[ \left( 1-\epsilon/2 \right) \epsilon \right]^k \equiv \epsilon^k,\end{split} \end{equation}
    where we use $\Pi_l^{(i)}\Pi_{l^{\prime}}^{(i)}  = \epsilon_{ll^{\prime}}\Pi_l^{(i)}$. $|  \psi_i^{l_1,\dots,l_k}\rangle $ is the normalised state after the application of the projections onto $\left| \psi_i \right\rangle $ and $L = \sum_{l_1,\dots,l_k = \pm 1} \theta(l_k)$ the number of positive eigenvalues. With this and the binomial theorem, one can estimate  (\ref{eq:25}) further to  \begin{equation}
 \label{eq:27}
 C(\psi) \leq\sum_{k=0}^{N_{\mathrm{eff}}}
  q^{N_{\mathrm{eff}}-k}(1-q)^k\sum_{j_1 < \ldots < j_k} \epsilon^k = 
  \sum_{k=0}^{N_{\mathrm{eff}}} \binom{N_{\mathrm{eff}}}{k}
  q^{N_{\mathrm{eff}}-k}(1-q)^k\epsilon^k =  
  \left[ q + \epsilon \left(  1-q \right) \right]^{N_{\mathrm{eff}}}.  
  \end{equation}
 \end{proof}
   
\section*{References}
\label{sec:references}

\bibliographystyle{iopart-numMOD}
\bibliography{References}

\providecommand{\newblock}{}
\begin{thebibliography}{10}
\expandafter\ifx\csname url\endcsname\relax
  \def\url#1{{\tt #1}}\fi
\expandafter\ifx\csname urlprefix\endcsname\relax\def\urlprefix{URL }\fi
\providecommand{\eprint}[2][]{\url{#2}}

\bibitem{guhne_entanglement_2009}
G{\"u}hne O and T{\'o}th G 2009 {\em Physics Reports\/} {\bf 474} 1--75

\bibitem{briegel_persistent_2001}
Briegel H~J and Raussendorf R 2001 {\em Physical Review Letters\/} {\bf 86}
  910--913

\bibitem{hein_entanglement_2005}
Hein M, D{\"u}r W, Eisert J, Raussendorf R, Nest M~V~d and Briegel H~J 2005
  Entanglement in graph states and its applications {\em Proceedings of the
  International School of Physics {"Enrico} Fermi" on {"Quantum} Computers,
  Algorithms and Chaos"\/}

\bibitem{dicke_coherence_1954}
Dicke R~H 1954 {\em Physical Review\/} {\bf 93} 99--110

\bibitem{greenberger_going_1989}
Greenberger D, Horne M~A and Zeilinger A 1989 Going beyond bell's theorem {\em
  in {'Bell's} Theorem, Quantum Theory, and Conceptions of the Universe', M.
  Kafatos (Ed.)\/} (Dordrecht: Kluwer) pp 69--72

\bibitem{bjork_size_2004}
Bj{\"o}rk G and Mana P~G~L 2004 {\em Journal of Optics B: Quantum and
  Semiclassical Optics\/} {\bf 6} 429--436

\bibitem{korsbakken_measurement-based_2007}
Korsbakken J~I, Whaley K~B, Dubois J and Cirac J~I 2007 {\em Physical Review
  A\/} {\bf 75} 042106

\bibitem{marquardt_measuring_2008}
Marquardt F, Abel B and von Delft J 2008 {\em Physical Review A\/} {\bf 78}
  012109

\bibitem{frowis_measures_2012}
Fr{\"o}wis F and D{\"u}r W 2012 {\em New Journal of Physics\/} {\bf 14} 093039

\bibitem{zurek_pointer_1981}
Zurek W~H 1981 {\em Phys. Rev. D\/} {\bf 24} 1516

\bibitem{zurek_preferred_1993}
Zurek W~H 1993 {\em Prog. Theor. Phys.\/} {\bf 89} 281--312

\bibitem{zurek_decoherence_2003}
Zurek W~H 2003 {\em Rev. Mod. Phys.\/} {\bf 75} 715--775

\bibitem{joos_decoherence_2003}
Joos E 2003 {\em Decoherence and the Appearance of a Classical World in Quantum
  Theory\/} (Springer)

\bibitem{schlosshauer_decoherence:_2007}
Schlosshauer M~A 2007 {\em Decoherence: And the Quantum-To-Classical
  Transition\/} (Springer)

\bibitem{wismann_optimal_2012}
Wi{\ss}mann S, Karlsson A, Laine E~M, Piilo J and Breuer H~P 2012 {\em Phys.
  Rev. A\/} {\bf 86} 062108

\bibitem{nielsen_quantum_2010}
Nielsen M~A and Chuang I~L 2010 {\em Quantum Computation and Quantum
  Information\/} (Cambridge University Press)

\bibitem{dur_three_2000}
D{\"u}r W, Vidal G and Cirac J~I 2000 {\em Physical Review A\/} {\bf 62} 062314

\bibitem{goodman_representations_1998}
Goodman R and Wallach N~R 1998 {\em Representations and Invariants of the
  Classical Groups\/} (Cambridge University Press)

\bibitem{frowis_stable_2011}
Fr{\"o}wis F and D{\"u}r W 2011 {\em Physical Review Letters\/} {\bf 106}
  110402

\bibitem{helstrom_quantum_1976}
Helstrom C 1976 {\em Quantum detection and estimation theory\/} (Academic
  Press)

\bibitem{braunstein_statistical_1994}
Braunstein S~L and Caves C~M 1994 {\em Physical Review Letters\/} {\bf 72} 3439

\bibitem{holevo_probabilistic_2011}
Holevo A~S 2011 {\em Probabilistic and Statistical Aspects of Quantum Theory\/}
  (Springer)

\bibitem{mora_epsilon-measures_2008}
Mora C~E, Piani M and Briegel H~J 2008 {\em New Journal of Physics\/} {\bf 10}
  083027

\end{thebibliography}
\end{document}